\title{\sc {Algorithms for the strong 
chromatic index of Halin graphs, distance-hereditary graphs and 
maximal outerplanar graphs}}
\author{
 Ton~Kloks\inst{1}
\and 
 Sheung-Hung~Poon\inst{1}
\and 
 Chin-Ting~Ung\inst{1}
\and 
 Yue-Li~Wang\inst{2}
}
\institute{
 Department of Computer Science\\
 National Tsing Hua University,
 No.~101, Sec.~2, Kuang Fu Rd., Hsinchu, Taiwan\\
 {\tt spoon@cs.nthu.edu.tw, wonderboy0915@gmail.com} 
\and 
 Department of Information Management\\
 National Taiwan University of Science and Technology\\
 No.~43, Sec.~4, Keelung Rd., Taipei, 106, Taiwan\\
 {\tt ylwang@cs.ntust.edu.tw}
}
\begin{document}

\maketitle

\begin{abstract}
We show that there exist linear-time algorithms that compute the 
strong chromatic index of Halin graphs, of 
maximal outerplanar graphs and of 
distance-hereditary  
graphs. 
\end{abstract}

\section{Introduction}
%%%%%%%%%%%%%%%%%%%%%%

\begin{definition}
Let $G=(V,E)$ be a graph. A {\em strong edge coloring\/} of $G$ is a 
proper edge coloring such that no edge is adjacent to two edges of the 
same color. 
\end{definition}

Equivalently, a strong edge coloring of $G$ is a vertex coloring 
of $L(G)^2$, the square of the linegraph of $G$. 
The strong chromatic index of $G$ is the minimal integer $k$ such 
that $G$ has a strong edge coloring with $k$ colors. We denote the 
strong chromatic index of $G$ by $s\chi^{\prime}(G)$. 

Recently it was shown that the strong chromatic index 
is bounded by 
\[(2-\epsilon)\Delta^2\] for some $\epsilon >0$, where 
$\Delta$ is the maximal degree of the graph~\cite{kn:molloy}.%
\footnote{In their paper Molloy and Reed state that $\epsilon \geq 0.002$ 
when $\Delta$ is sufficiently large.}  
Earlier, Andersen showed that the strong chromatic index of a cubic graph 
is at most ten~\cite{kn:andersen}. 

Let $\mathcal{G}$ be the class of chordal graphs, or the class of 
cocomparability graphs, or the class of 
weakly chordal graphs. If $G \in \mathcal{G}$ then also 
$L(G)^2 \in \mathcal{G}$ and it follows that the strong chromatic index 
can be computed in polynomial time for these 
classes~\cite{kn:cameron,kn:cameron2,kn:cameron3}.  
Also for graphs of bounded treewidth 
there exists a polynomial time algorithm 
that computes the strong chromatic 
index~\cite{kn:salavatipour}.% 
\footnote{This algorithm checks in $O(n(s+1)^t)$ time whether a 
partial $k$-tree has a strong edge coloring that uses at most $s$ colors. 
Here, the exponent $t=2^{4(k+1)+1}$.} 
 
\begin{definition}
Let $T$ be a tree without vertices of degree two. Consider a plane embedding 
of $T$ and connect the leaves of $T$ by a cycle that crosses no edges of $T$. 
A graph that is constructed in this way is called a {\em Halin graph\/}. 
\end{definition}

Halin graphs have treewidth at most three. Furthermore, if $G$ 
is a Halin graph of bounded degree, then also $L(G)^2$ has 
bounded treewidth 
and thus the strong chromatic index of $G$ can be computed in 
linear time. Recently, Ko-Wei Lih, {\em et al.\/}, 
proved that a cubic Halin graph 
other than one of the two `necklaces' $Ne_2$ (the complement of 
$C_6$) and $Ne_4$, has strong 
chromatic index at most 7. The two exceptions have 
strong chromatic index 9 and 8, respectively. If $T$ is the underlying 
tree of the Halin graph, and if $G \neq Ne_2$ and $G$ is 
not a wheel $W_n$ 
with $n \neq 0 \bmod{3}$, then Ping-Ying Tsai, {\em et al.\/}, 
show that the strong chromatic index is bounded by 
$s\chi^{\prime}(T)+3$. 
(See~\cite{kn:shiu2,kn:shiu} for earlier results that  
appeared in regular papers.%
\footnote{The results of Ko-Wei Lih and Ping-Ying Tsai, {\em et al.\/}, were 
presented at the Sixth Cross-Strait Conference on Graph Theory and 
Combinatorics which was held at the National Chiao Tung University 
in Taiwan in 2011.})  
 
\medskip 

If $G$ is a Halin graph then $L(G)^2$ 
has bounded rankwidth. 
In~\cite{kn:ganian} it is shown that there exists a 
polynomial algorithm that computes the chromatic number 
of graphs with bounded rankwidth, thus the strong chromatic 
index of Halin graphs can be computed in polynomial time. 
In passing, let us mention the following result. 
A class of graphs $\mathcal{G}$ is 
$\chi$-bounded if there exists a function $f$ such that 
$\chi(G) \leq f(\omega(G))$ for $G \in \mathcal{G}$. Here 
$\chi(G)$ is the chromatic number of $G$ and $\omega(G)$ is the 
clique number of $G$.  
Recently, Dvo\v{r}\'ak and Kr\'al showed that for every $k$, 
the class of graphs with rankwidth at most $k$ 
is $\chi$-bounded~\cite{kn:dvorak}. 
Obviously, the graphs $L(G)^2$ have a uniform $\chi$-bound for 
graphs $G$ in the class of Halin graphs. 

In Section~\ref{section halin}
we show that there exists a linear-time algorithm 
that computes the strong chromatic index of Halin graphs. 
In Section~\ref{section dh} we show that there exists a 
linear-time algorithm that computes the strong chromatic index 
of distance-hereditary graphs. In Section~\ref{section out} we 
show that there exists a linear-time algorithm that computes the 
strong chromatic index of maximal outerplanar graphs.  
  
\section{The strong chromatic index of Halin graphs}
\label{section halin}
%%%%%%%%%%%%%%%%%%%%%%%%%%%%%%%%%%%%%%%%%%%%%%%%%%%%

The following lemma is easy to check. 

\begin{lemma}[Ping-Ying Tsai]
\label{basics}
Let $C_n$ be the cycle with $n$ vertices and let $W_n$ be the wheel 
with $n$ vertices in the cycle. 
Then 
\[s\chi^{\prime}(C_n) = \begin{cases} 
3 & \quad \text{if $n= 0 \bmod{3}$} \\
5 & \quad \text{if $n=5$} \\
4 & \quad \text{otherwise} 
\end{cases}
\quad  
s\chi^{\prime}(W_n)= \begin{cases}
n+3 & \quad \text{if $n=0 \bmod{3}$}\\
n+5 & \quad \text{if $n=5$}\\
n+4 & \quad \text{otherwise.}
\end{cases}\]
\end{lemma}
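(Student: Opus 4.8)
The plan is to reduce each of the two formulas to a proper‑coloring question on an explicitly described graph and then carry out the (routine) casework. For the cycle, note that the line graph of $C_n$ is again $C_n$, so a strong edge coloring of $C_n$ is exactly a proper vertex coloring of $C_n^2$, the graph on $\{1,\dots,n\}$ in which $i$ and $j$ are adjacent whenever they are at cyclic distance $1$ or $2$; hence $s\chi^{\prime}(C_n)=\chi(C_n^2)$. For the lower bounds I would observe that three cyclically consecutive vertices always form a triangle, so $\chi(C_n^2)\ge 3$, and that $C_4^2=K_4$ and $C_5^2=K_5$, which pins the values $4$ and $5$ for $n=4,5$. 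To see that $\chi(C_n^2)=3$ forces $3\mid n$: in any proper $3$-coloring the color of $i+2$ is forced (it is the unique color missing from $\{i,i+1\}$), hence so is that of $i+3$, which must repeat the color of $i$; the coloring is therefore $3$-periodic around the cycle, which closes up only when $3\mid n$. For the upper bounds: the pattern $1,2,3,1,2,3,\dots$ is a proper $3$-coloring when $3\mid n$; when $n\equiv 1\pmod 3$ the pattern made of $(n-4)/3$ blocks $1,2,3$ followed by $1,2,3,4$ is a proper $4$-coloring, and when $n\equiv 2\pmod 3$ (so $n\ge 8$, since $n=5$ is excluded) the pattern of $(n-8)/3$ blocks $1,2,3$ followed by $1,2,3,4,1,2,3,4$ works; together with the triangle bound this yields $\chi(C_n^2)=4$ in all remaining cases.

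\textbf{The wheel.} Let $h$ be the hub and $v_1,\dots,v_n$ the rim vertices, with spokes $hv_i$ and rim edges $v_iv_{i+1}$ (indices modulo $n$). All spokes share $h$, so they form a clique in $L(W_n)^2$ and receive $n$ pairwise distinct colors in any strong edge coloring. Every spoke lies within distance $2$ of every rim edge in $L(W_n)$: either they share an endpoint, or they are joined by the path $hv_i,\,hv_j,\,v_jv_{j+1}$. Hence no rim edge may use a spoke color, so the colors on the rim are disjoint from the $n$ spoke colors. Finally, two rim edges at cyclic distance at least $3$ are at distance at least $3$ in $L(W_n)$ (every path between them, along the rim or through the hub, has length $\ge 3$), so the conflict graph induced on the rim edges is exactly $C_n^2$, and among themselves the rim edges need $\chi(C_n^2)=s\chi^{\prime}(C_n)$ colors. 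Altogether $s\chi^{\prime}(W_n)\ge n+s\chi^{\prime}(C_n)$, and the reverse inequality follows by coloring the spokes with $1,\dots,n$ and the rim with a strong edge coloring of $C_n$ drawn from the fresh colors $n+1,\dots,n+s\chi^{\prime}(C_n)$; substituting the cycle formula gives the stated values of $s\chi^{\prime}(W_n)$.

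\textbf{Main obstacle.} The only genuine content is the cycle analysis: the periodicity argument that determines exactly when three colors suffice, and exhibiting $4$-colorings that close up correctly for every $n\not\equiv 0\pmod 3$ with $n\ne 5$ (the residues $1$ and $2$ modulo $3$, plus the degenerate case $n=4$, need their own patterns). Once $s\chi^{\prime}(C_n)$ is established, the wheel case is just bookkeeping about short distances in $L(W_n)$, exactly as sketched above.
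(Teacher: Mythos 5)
Your proof is correct; note that the paper itself offers no argument here — it attributes the lemma to Ping-Ying Tsai and dismisses it as ``easy to check'' — so your write-up supplies a verification the paper omits rather than diverging from one it contains. Your reduction of the cycle case to $\chi(C_n^2)$, the forced $3$-periodicity argument for the case of three colors, and the explicit wrap-around patterns for $n\equiv 1,2 \pmod 3$ (with $n=4$ handled as $K_4$ and $n=5$ as $K_5$) are all sound; I checked the block patterns at the junctions and they close up as claimed. The wheel analysis is also right, and it can be phrased even more compactly than you do: your three observations (the $n$ spokes form a clique in $L(W_n)^2$, every rim edge conflicts with every spoke, and the rim edges induce exactly $C_n^2$) amount to saying that $L(W_n)^2$ is the join of $K_n$ with $C_n^2$, whence $s\chi^{\prime}(W_n)=n+\chi(C_n^2)=n+s\chi^{\prime}(C_n)$ immediately, matching the stated values in all three cases. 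The only point worth stating explicitly in a final version is the lower-bound logic that since rim colors and spoke colors are disjoint sets, the two contributions add — which you do say — and that the cyclic-distance-$\geq 3$ check also covers paths through the hub, which you also say. No gaps.
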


A double wheel is a Halin graph in which the tree 
$T$ has exactly two vertices 
that are not leaves. 

\begin{lemma}[Ping-Ying Tsai]
\label{basic2}
Let $W$ be a double wheel where $x$ and $y$ are 
the vertices of $T$ that are not leaves. 
Then $s\chi^{\prime}(T)=d(x)+d(y)-1$ where $d(x)$ and 
$d(y)$ are the degrees of $x$ and $y$. Furthermore, 
\[s\chi^{\prime}(W)= 
\begin{cases}
s\chi^{\prime}(T)+4=9 & \quad \text{if $d(x)=d(y)=3$, {\em i.e.\/}, if 
$W=\Bar{C_6}$} \\
s\chi^{\prime}(T)+2=d(y)+4 & \quad \text{if $d(y) > d(x)=3$} \\
s\chi^{\prime}(T)+1=d(x)+d(y) & \quad \text{if $d(y) \geq d(x) >3$.}
\end{cases}\]
\end{lemma}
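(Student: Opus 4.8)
\medskip
\noindent\textbf{Plan for the proof.}
Fix a double wheel $W$ with non-leaf vertices $x,y$ of $T$; write $p=d(x)-1$ and $q=d(y)-1$, and since $W$ is symmetric in $x$ and $y$ assume $p\le q$ (also $p,q\ge 2$, because $T$ has no vertex of degree two). Let $a_1,\dots,a_p$ be the leaves of $T$ at $x$ and $b_1,\dots,b_q$ the leaves at $y$, labelled so that the rim cycle of $W$ is $a_1a_2\cdots a_p b_1b_2\cdots b_q a_1$. Call the $p+q+1$ edges meeting $x$ or $y$ the \emph{spokes}, the remaining $p+q$ edges the \emph{rim edges}, and single out the two \emph{bridges} $a_pb_1$ and $a_1b_q$. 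I will use that $s\chi^{\prime}(W)=\chi(L(W)^2)$, that two edges of $W$ are adjacent in $L(W)^2$ exactly when they share an endpoint or some edge of $W$ joins an endpoint of one to an endpoint of the other, and hence that $s\chi^{\prime}(W)$ is at least the clique number of $L(W)^2$.

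The equality $s\chi^{\prime}(T)=d(x)+d(y)-1$ is immediate: $T$ has $p+q+1=d(x)+d(y)-1$ edges, two spokes at $x$ (or two at $y$) share a vertex, and $xa_i$ and $yb_j$ are joined by the edge $xy$, so $L(T)^2$ is complete. The same argument shows the $d(x)+d(y)-1$ spokes still form a clique of $L(W)^2$, and the point of the lower bound is to enlarge this clique. Each bridge is adjacent in $L(W)^2$ to \emph{every} spoke: for $a_1b_q$, the endpoint $a_1$ (with $a_1\sim x$) links it to all the $xa_i$ and to $xy$, and the endpoint $b_q$ (with $b_q\sim y$) links it to all the $yb_j$. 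So the spokes together with $a_1b_q$ form a clique of size $d(x)+d(y)$, giving $s\chi^{\prime}(W)\ge d(x)+d(y)=s\chi^{\prime}(T)+1$ in all cases. If $p=2$, the rim edge $a_1a_2$ joins the two bridges $a_1b_q$ and $a_2b_1$ in $L(W)^2$, so both bridges can be adjoined and $s\chi^{\prime}(W)\ge d(x)+d(y)+1=s\chi^{\prime}(T)+2$; and if also $q=2$, so $W=\overline{C_6}$ has only nine edges, one checks directly (for instance the rim edge $a_2b_1$ joins the remaining rim edges $a_1a_2$ and $b_1b_2$) that $L(W)^2=K_9$, hence $s\chi^{\prime}(W)\ge 9=s\chi^{\prime}(T)+4$.

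What remains — and where the real work lies — is to construct strong edge colorings meeting these bounds. The spokes are forced onto $d(x)+d(y)-1$ distinct colors, so the task is to color the rim, which induces the cycle $C_{p+q}$ (whose strong chromatic index is small by Lemma~\ref{basics}), using only the one, two, or four extra colors allowed, respecting each rim edge's $L(W)^2$-conflicts with the spokes. For $W=\overline{C_6}$ there is nothing to do. For $p\ge 3$ I would color both bridges with the single extra color $d(x)+d(y)$ — allowed since the two bridges are non-adjacent in $L(W)^2$ when $p,q\ge 3$ — and color every other rim edge with a color of a spoke it avoids: a rim edge among the $a_i$ is adjacent in $L(W)^2$ to every spoke through $x$ but to at most one of $yb_1,\dots,yb_q$, hence has a list of at least $q-1\ge 2$ admissible colors among $yb_1,\dots,yb_q$, and symmetrically a rim edge among the $b_j$ has at least $p-1\ge 2$ admissible colors among $xa_1,\dots,xa_p$; since the $a$-side and the $b$-side rim edges each induce the square of a path — which has clique number at most three — with lists this wide, one can complete the coloring, and the assumption $p\le q$ is exactly what keeps the lists large enough on the side carrying more rim edges. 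For $p=2<q$ the two bridges are adjacent in $L(W)^2$ and must take the two extra colors $d(x)+d(y)$ and $d(x)+d(y)+1$, after which the single $a$-side rim edge $a_1a_2$ (whose only free colors are those of $yb_2,\dots,yb_{q-1}$, which stay available because the rim edges meeting $b_1$ or $b_q$ never reuse them) and the $b$-side rim edges are colored as before. I expect the obstacle to be entirely in this final construction — checking that the admissible lists never collapse near the two bridges and that the parity of $C_{p+q}$ is compatible with them in every residue class modulo three — and carrying it out simultaneously certifies that the lower bounds are tight and yields the claimed values of $s\chi^{\prime}(W)$.
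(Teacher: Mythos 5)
The paper itself states Lemma~\ref{basic2} without proof (it is quoted as a result of Ping-Ying Tsai), so there is no in-paper argument to compare against; judging your proposal on its own, the lower-bound half is correct and complete: $L(T)^2$ is complete so $s\chi^{\prime}(T)=d(x)+d(y)-1$, the spokes remain a clique in $L(W)^2$, each bridge is adjacent to every spoke, both bridges can be adjoined when $d(x)=3$, and $L(\overline{C_6})^2=K_9$. These computations are all verifiable and give the right lower bounds in the three cases.

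The gap is in the upper bounds, which you yourself flag as ``where the real work lies'' but do not carry out, and your sketch for the middle case $d(y)>d(x)=3$ is wrong as stated. With $p=2$, every $b$-side rim edge $b_jb_{j+1}$ has both endpoints adjacent to $y$, so in $L(W)^2$ it conflicts with every spoke at $y$ \emph{and} with $xy$; the only spoke colors available to it are those of $xa_1$ and $xa_2$, a palette of just two. But for $d(y)\geq 5$ the edges $b_1b_2$, $b_2b_3$, $b_3b_4$ are pairwise adjacent in $L(W)^2$ (the first and third are linked by the edge $b_2b_3$), so they need three distinct colors, and ``colored as before'' from $\{xa_1,\dots,xa_p\}$ is impossible. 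A correct construction must reuse the two extra colors on rim edges far from the bridges (only the rim edges within two cycle-steps of a bridge conflict with it), and organizing that reuse is precisely the content of the $s\chi^{\prime}(T)+2$ upper bound. In the case $d(y)\geq d(x)>3$ your scheme (bridges share one extra color, $a$-side rim edges take $y$-spoke colors, $b$-side rim edges take $x$-spoke colors) is sound in outline, but the final step --- properly coloring the square of a path from a palette of $p\geq 3$ (resp.\ $q\geq 3$) colors with only the two end edges each losing one color --- is asserted (``one can complete the coloring'') rather than proved, and you explicitly defer the mod-$3$ parity check near the bridges. Until that pattern is exhibited (it does exist, but needs a short case analysis on the residue of the path length), the claimed equalities, as opposed to the lower bounds, are not established.
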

  
Let $G$ be a Halin graph with tree $T$ and cycle $C$. 
Then obviously,  
\begin{equation}
\label{eq1}
s\chi^{\prime}(G) \leq s\chi^{\prime}(T) + s\chi^{\prime}(C).
\end{equation}
The linegraph of a tree is a claw-free blockgraph. Since a sun 
$S_r$ with $r > 3$ has a claw,  
$L(T)$ has no induced sun $S_r$ with $r > 3$. It follows 
that $L(T)^2$ is a chordal graph~\cite{kn:laskar} (see 
also~\cite{kn:cameron}; in this paper Cameron proves 
that $L(G)^2$ is chordal for any chordal graph $G$). Notice that 
\begin{equation}
\label{bound}
s\chi^{\prime}(T)=\chi(L(T)^2)=\omega(L(T)^2) \leq  2\Delta(G)-1 
\quad\Rightarrow\quad 
s\chi^{\prime}(G) \leq 2\Delta(G)+4.
\end{equation}

\subsection{Cubic Halin graphs}
%%%%%%%%%%%%%%%%%%%%%%%%%%%%%%%

In this subsection we outline a simple linear-time algorithm for 
the cubic Halin graphs. 

\begin{theorem}
\label{cubic case}
There exists a linear-time algorithm that computes the 
strong chromatic index of cubic Halin graphs.
\end{theorem}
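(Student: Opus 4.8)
The plan is to exploit the tree structure of a cubic Halin graph $G$ together with the known constant-size upper bound on $s\chi^{\prime}(G)$. By the results quoted in the introduction (Lih \emph{et al.}), every cubic Halin graph other than $Ne_2$ and $Ne_4$ has strong chromatic index at most $7$, while $s\chi^{\prime}(Ne_2)=9$ and $s\chi^{\prime}(Ne_4)=8$; also $s\chi^{\prime}(G)\geq 5$ for any graph with a triangle-free neighborhood of an edge, and in fact $s\chi^{\prime}(G)\in\{5,6,7\}$ for all remaining cubic Halin graphs. So after a linear-time scan that tests whether $G$ is one of the two exceptional necklaces (a constant-size check, since they have $6$ and $8$ vertices), it remains to decide, in linear time, which of the values $5$, $6$, $7$ the graph attains.

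First I would reduce the decision problem to a bounded-width dynamic programming problem. Since $G$ is cubic, $L(G)$ has maximum degree $4$ and $L(G)^2$ has maximum degree at most $2\cdot 4=8$; more importantly, $G$ has treewidth at most $3$, so $L(G)^2$ has treewidth bounded by a constant (each bag of a width-$3$ tree decomposition of $G$ touches only $O(1)$ edges, and squaring the line graph only blows the bag size up by a constant factor). A strong edge coloring of $G$ with $s$ colors is exactly a proper $s$-coloring of $L(G)^2$, so for each fixed $s\in\{5,6,7\}$ one can run the standard tree-decomposition dynamic program for bounded-treewidth $s$-colorability; since $s$ and the treewidth are both constants, each run takes linear time, and we output the least $s$ for which it succeeds. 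This already gives a linear-time algorithm; the bulk of the section is presumably devoted to making it genuinely simple rather than invoking the heavy machinery of~\cite{kn:salavatipour}.

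To get the \emph{simple} algorithm, I would process the underlying tree $T$ from the leaves toward an interior root, maintaining at each subtree a bounded amount of interface information: the colors used on the (constantly many) edges incident with the boundary fan vertex and on the two cycle edges crossing the cut, recorded up to renaming of colors so that the state space is a fixed finite set. The transition at an interior vertex $x$ of $T$ combines the states of its child subtrees, checks the local strong-coloring constraints among the edges meeting at $x$ and along the arc of $C$ subtended by $x$, and produces the new interface state; closing the cycle at the root is one final constant-size consistency check. Correctness follows because the only conflicts a strong edge coloring forbids involve edges at distance $\leq 2$, all of which are confined to $x$ together with its children and the corresponding cycle arc, hence visible in the interface.

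The main obstacle will be bookkeeping rather than any conceptual difficulty: one must argue carefully that the interface genuinely separates the already-colored part from the rest for the purpose of \emph{strong} (distance-two) constraints, i.e. that no two conflicting edges straddle the cut without both touching the recorded boundary edges, and one must verify that the partial-coloring equivalence (color classes up to renaming, restricted to the $O(1)$ boundary edges) has only constantly many classes so that the DP table size is bounded independently of $n$. Once those finite case analyses are pinned down, the running time is immediate: $O(n)$ states, each updated in $O(1)$ time, plus the constant-time exceptional-graph check at the start.
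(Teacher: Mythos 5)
Your proposal is correct and essentially the paper's own argument: root the underlying tree at a leaf, process subtrees bottom-up, and for each subtree keep a table indexed by colorings of a constant-size boundary set of edges around the cut (parent edge, the two crossing cycle edges, and the edges meeting them), merging children's tables and relying on the separation claim that any two edges at distance at most one that straddle the cut must touch this boundary. The only cosmetic difference is that you cap the number of colors at $7$ via the quoted Lih et al.\ result (plus a constant-size check for the two necklaces), whereas the paper simply uses $s\chi^{\prime}(G)\leq 2\Delta(G)+4=10$ from chordality of $L(T)^2$, which is all the dynamic program needs.
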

\begin{proof}
Let $G$ be a cubic Halin graph with plane tree $T$ and cycle $C$. 
Let $k$ be a natural number. We describe a linear-time 
algorithm that 
checks if $G$ has a strong edge coloring with at most $k$ colors. 
By Equation (\ref{bound}) 
we may assume that $k$ is at most 10. Thus the 
correctness of this algorithm proves the theorem. 
 
\medskip
Root the tree $T$ at an arbitrary leaf $r$ of $T$.  
Consider a vertex $x$ in $T$. There is a unique path $P$ 
in $T$ from $r$ to $x$ in $T$. Define the subtree $T_x$ at $x$  
as the maximal connected subtree of $T$ that does not contain 
an edge of $P$. If $x=r$ then $T_x=T$.   

Let $H(x)$ be the subgraph of $G$ induced by the vertices of 
$T_x$. 
Notice that, 
if $x \neq r$ then the edges of $H(x)$ that are 
not in $T$ form a path $Q(x)$ of edges in $C$.   

\medskip
For $x \neq r$ define the boundary $B(x)$ 
of $H(x)$ as the following set 
of edges. 
\begin{enumerate}[\rm (a)]
\item 
\label{i1}
The unique edge of $P$ that 
is incident with $x$. 
\item 
\label{i2}
The two edges of $C$ that connect 
the path $Q(x)$ of $C$ with the rest of $C$.   
\item Consider the endpoints of the 
edges mentioned in (\ref{i1}) and (\ref{i2}) 
that are in $T_x$. 
Add the remaining two edges that are incident with each of 
these endpoints to $B(x)$. 
\end{enumerate}
Thus the boundary $B(x)$ consists of at most 9 edges. 
The following claim is easy to check. It proves 
the correctness of the algorithm described below. 
Let $e$ be an edge of $H(x)$. Let $f$ be an edge 
of $G$ that is not an edge of $H(x)$. If $e$ and $f$ are 
at distance at most 1 in $G$ then $e$ or $f$ is in 
$B(x)$.\footnote{Two edges in $G$ are at distance at most one 
if the subgraph induced by their endpoints is either $P_3$, or 
$K_3$ or $P_4$. We assume that it can be checked in constant time 
if two edges $e$ and $f$ are at distance at most one. This can be 
achieved by a suitable data structure.}
  
\medskip
Consider all possible colorings of the edges in $B(x)$. 
Since $B(x)$ contains at most 9 edges and since there are at 
most $k$ different colors for each edge, there are at most 
\[k^9 \leq 10^9\]
different colorings of the edges in $B(x)$.       

The algorithm now fills a table which gives a boolean 
value for each coloring of the boundary $B(x)$. This boolean value is 
{\tt TRUE} if and only if  
the coloring of the edges in $B(x)$ extends to an  
edge coloring of the union of the sets of edges in $B(x)$ 
and in $H(x)$ with at most $k$ colors, such that any pair of 
edges in this set that are 
at distance at most one in $G$, have different colors. 
These boolean values are computed as follows. We prove the correctness 
by induction on the size of the subtree at $x$. 

\medskip
First consider the case where the subtree at $x$ consists of the 
single vertex $x$. Then $x \neq r$ and $x$ is a leaf of $T$. 
In this case  
$B(x)$ consists 
of three edges, namely the three edges that are incident with $x$. 
These are two edges of $C$ and one edge of $T$. 
If the colors of these three edges in 
$B$  
are different 
then the boolean value is set to 
{\tt TRUE}. Otherwise it is set to {\tt FALSE}. 
Obviously, this is a correct assignment. 

\medskip
Next consider the case where $x$ is an internal vertex of $T$. 
Then $x$ has two children in the subtree at $x$. 
Let $y$ and $z$ be the two children and consider the two 
subtrees rooted at $y$ and $z$. 

The algorithm that computes the tables for each vertex $x$ 
processes the subtrees in order of increasing number of vertices.   
(Thus the roots of the subtrees are visited in postorder). 
We now assume that the tables at $y$ and $z$ are 
computed correctly and show how the  
table for $x$ is computed correctly and in constant time. 
That is, we prove that the algorithm described below 
computes the table at $x$ such that it contains a 
coloring of $B(x)$ with a value {\tt TRUE} if and only 
if there 
exists an extension of this coloring to the 
edges of $H(x)$ and $B(x)$ such that any two 
different edges $e$ and $f$ 
at distance at most one in $G$, each one in 
$H(x)$ or in $B(x)$, have different colors. 
   
\medskip
Consider a coloring of the edges in the 
boundary $B(x)$. 
The boolean value in the table of $x$ 
for this coloring is computed as follows. 
Notice that 
\begin{enumerate}[\rm (i)]
\item $B(y) \cap B(z)$ consists of one edge and this 
edge is not in $B(x)$, and 
\item $B(x) \cap B(y)$ consists 
of at most four edges, namely the edge $(x,y)$ and the three 
edges of $B(y)$ that are incident with one vertex of 
$C \cap H(y)$.   
Likewise, $B(x) \cap B(z)$ consists of at most four edges. 
\end{enumerate} 
The algorithm varies the possible colorings of 
the edge in $B(y) \cap B(z)$. 
Colorings of $B(x)$, $B(y)$ and $B(z)$ are 
consistent if the intersections are the same color and the pairs 
of edges in 
\[B(x) \cup B(y) \cup B(z)\] 
that are at distance at most one in $G$ have different colors.   
A coloring of $B(x)$ is assigned the value {\tt TRUE} 
if there exist colorings of $B(y)$ and $B(z)$ such that the 
three colorings are consistent and $B(y)$ and $B(z)$ are assigned the 
value {\tt TRUE} in the tables at $y$ and at $z$ respectively. 
Notice that the table at $x$ is built in constant time. 

Consider a coloring of $B(x)$ that is assigned the value {\tt TRUE}. 
Consider colorings of the edges of $B(y)$ and $B(z)$ that are 
consistent with $B(x)$ and that are assigned the value 
{\tt TRUE} in the tables at $y$ and $z$. By induction, there 
exist extensions of the colorings of $B(y)$ and $B(z)$ 
to the edges of $H(y)$ and $H(z)$. The union of these 
extensions provides a $k$-coloring of the edges in $H(x)$. 

Consider two edges $e$ and $f$ in $B(x) \cup B(y) \cup B(z)$. 
If their distance 
is at most one then they have different colors since 
the coloring of $B(x) \cup B(y) \cup B(z)$ is 
consistent. Let $e$ and $f$ be a pair of edges in $H(x)$. 
If they are both in $H(y)$ or both in $H(z)$ then they have 
different colors.  
Assume that $e$ is in $H(y)$ and assume that $f$ is not in 
$H(y)$. If $e$ and $f$ are at distance at most one, 
then $e$ or $f$ is in $B(y)$. If they are both in $B(y)$, 
then they have different colors, due to the consistency. 
Otherwise, by the induction hypothesis, they have different colors. 
This proves the claim on the correctness. 
    
\medskip
Finally, consider the table for the vertex $x$ which is 
the unique neighbor of $r$ in $T$. 
By the induction hypothesis, 
and the fact that every edge in $G$ is either in 
$B(x)$ or in $H(x)$,   
$G$ has a strong edge coloring with at most 
$k$ colors if and only if the table at $x$ contains a coloring 
of $B(x)$ with three different colors for which the boolean is 
set to {\tt TRUE}. 

This proves the theorem. 
\qed\end{proof}

\begin{remark}
The involved constants in this algorithm are improved considerably 
by the recent results of Ko-Wei Lih, Ping-Ying Tsai, {\em et al.\/}. 
\end{remark}
  
\subsection{Halin graphs of general degree}
%%%%%%%%%%%%%%%%%%%%%%%%%%%%%%%%%%%%%%%%%%%

\begin{theorem}
\label{general Halin graphs}
There exists a linear-time algorithm that computes the strong 
chromatic index of Halin graphs.
\end{theorem}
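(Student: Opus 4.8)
The plan is to generalize the tree-decomposition-style dynamic program from the cubic case (Theorem~\ref{cubic case}) to Halin graphs of arbitrary maximum degree. The key structural fact that makes this work is Equation~(\ref{bound}): the strong chromatic index of a Halin graph $G$ is at most $2\Delta(G)+4$, where $\Delta(G)$ is the maximum degree. If $\Delta(G)$ is bounded by a constant, we are already done, because then $L(G)^2$ has bounded treewidth (a Halin graph of bounded degree has bounded degree in its square-of-linegraph, and $G$ has treewidth at most three) and Salavatipour's algorithm applies. Moreover, the cubic argument itself generalizes verbatim as long as the boundary sets $B(x)$ stay of bounded size, which happens precisely when the degrees along the separator are bounded. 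So the whole difficulty is concentrated in the case of \emph{large} degree. First I would observe that in a Halin graph the vertices of degree at least four are exactly the internal vertices of $T$ of degree at least four, and that these can be few.

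The main idea for handling large degrees is a case analysis on the tree $T$. If $T$ has at most two internal vertices, then $G$ is a wheel $W_n$ or a double wheel, and Lemmas~\ref{basics} and~\ref{basic2} give $s\chi^{\prime}(G)$ by a closed formula that is computed in constant time after reading $n$, $d(x)$, $d(y)$. Otherwise $T$ has at least three internal vertices. Here I would argue that a high-degree internal vertex $v$ forces a local lower bound: the $d(v)$ tree-edges at $v$, together with the two cycle-edges entering the fan of $v$ and the $d(v)-1$ cycle-edges inside the fan, all lie within distance one of one another or of the central edges, forcing roughly $d(v)+(\text{constant})$ colors locally — mirroring the wheel bound $d(v)+4$. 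Combining this with the global upper bound $s\chi^{\prime}(G)\le s\chi^{\prime}(T)+s\chi^{\prime}(C)\le s\chi^{\prime}(T)+5$ from Equation~(\ref{eq1}) and the identity $s\chi^{\prime}(T)=\omega(L(T)^2)$, one sees that $s\chi^{\prime}(G)$ is sandwiched: $\max_v (d(v)+c_1)\le s\chi^{\prime}(G)\le \max_v(2d(v))+c_2$ is too crude, so instead I would show the sharper statement that $s\chi^{\prime}(G)$ equals $s\chi^{\prime}(T)$ plus a correction term in $\{1,2,3,4,5\}$ determined by local structure around the fans of maximum degree — essentially the Ping-Ying Tsai result that $s\chi^{\prime}(G)\le s\chi^{\prime}(T)+3$ with the listed exceptions — so it suffices to (a) compute $s\chi^{\prime}(T)=\omega(L(T)^2)$ in linear time, which is easy since $L(T)^2$ is chordal, and (b) decide the bounded correction term.

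To decide the correction term in linear time I would again run the boundary dynamic program of Theorem~\ref{cubic case}, but now only attempting to test $s\chi^{\prime}(T)+i$ colorability for each $i\in\{0,1,2,3,4,5\}$. The subtlety is that $B(x)$ is no longer of bounded size at a high-degree vertex. I would handle this by contracting each fan: replace the processing of a high-degree internal vertex $v$ by a sub-routine that, given the (bounded) colors on the two tree-edges and two cycle-edges that cross the separator at the top of $v$'s fan, decides in time linear in $d(v)$ whether the fan's interior edges — a ``path of triangles'' glued to a star — can be completed. This interior completion problem is a constraint-satisfaction problem on a path-like structure with a bounded number of colors (at most $s\chi'(T)+5\le 2\Delta+4$ colors, but the \emph{window} of interacting edges at each step is bounded), so it is itself solved by a left-to-right sweep with a state of bounded description length. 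Threading these fan sub-routines into the global postorder traversal gives overall linear time.

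The hard part will be the fan-completion sub-routine and the exact bookkeeping of which pairs of edges are at distance at most one across a fan boundary: unlike the cubic case, a fan can contribute long-range interactions (the two cycle-edges incident to the apex are within distance one of \emph{every} tree-edge at $v$), so the ``bounded window'' claim must be justified carefully by noting that although the apex edges see all spokes, the spokes only see their two neighbours among themselves, so after fixing the $O(1)$ apex-incident colors the remaining problem decomposes along the cycle. I expect the cleanest writeup actually routes around most of this by invoking the Ping-Ying Tsai bound $s\chi^{\prime}(G)\le s\chi^{\prime}(T)+3$ (valid except for $Ne_2$ and the wheels $W_n$ with $n\not\equiv 0\bmod 3$, all handled by the closed formulas), reducing the task to: compute $\omega(L(T)^2)$ in linear time, handle $O(1)$ exceptional families by formula, and for the generic case test the $O(1)$ candidate values $s\chi^{\prime}(T),\dots,s\chi^{\prime}(T)+3$ via the fan-aware dynamic program above. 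Verifying the correctness of that dynamic program — that a globally consistent colouring exists iff all the local tables say {\tt TRUE} — is then a routine induction exactly as in Theorem~\ref{cubic case}.
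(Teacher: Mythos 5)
Your overall frame agrees with the paper's: observe that $s\chi^{\prime}(T)=\omega(L(T)^2)=\max\{d(u)+d(v)-1 : (u,v)\in E(T)\}$ is computable in linear time, that by Equation~(\ref{eq1}) and Lemma~\ref{basics} the answer is one of the $O(1)$ values $s\chi^{\prime}(T),\dots,s\chi^{\prime}(T)+5$, and then run a postorder dynamic program over the plane tree as in Theorem~\ref{cubic case}. However, the one place where the general case genuinely differs from the cubic case --- the fact that at a high-degree internal vertex both the boundary and the palette have unbounded size, so explicit boundary colorings can no longer be enumerated --- is exactly where your argument has a hole. Your justification of the ``bounded window'' sweep rests on the claim that ``the spokes only see their two neighbours among themselves,'' and this is false: all spokes of a fan share the apex $v$, so they pairwise conflict and must all receive distinct colors; moreover a spoke $(v,y_i)$ is within distance one of \emph{every} edge incident with any other child $y_j$ (via the path through $(v,y_j)$), and when the children are leaves every interior cycle edge of the fan is within distance one of every spoke. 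Consequently a left-to-right sweep over the fan must remember, in some form, which colors have already been committed to the spokes and to the children's top edges --- information of unbounded size if stored explicitly. Your fallback via the Ping-Ying Tsai bound only shrinks the number of candidate values from six to four; it does nothing about this state-explosion, which is the actual crux. A second, smaller inaccuracy: you model the interior of a high-degree vertex as ``a path of triangles glued to a star,'' which presumes all children of $v$ are leaves of $T$; in general the children root large subtrees whose DP tables must be merged at $v$, and the same unbounded-interface problem reappears there.

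The missing idea --- and the heart of the paper's proof --- is to quotient the colorings by permutations of the color set. The paper classifies the edges around $H(x)$ into at most eight types (only the type ``tree edges at $x$'' has unbounded cardinality), and records for a coloring only the cardinalities $|W(S)|$, $S\subseteq\{1,\dots,8\}$, of the sets of colors used simultaneously by the types in $S$. Because distinctness constraints are invariant under renaming colors, this bounded vector of counts is all the information a parent ever needs, the number of equivalence classes is constant, and merging two tables is a constant-time table look-up; linearity follows. Your sweep could be repaired by introducing precisely this kind of symmetry/counting state (e.g., tracking only how many colors are shared between the bounded families of interface edges, not which ones), but as written the proposal does not contain that step, so the general-degree case is not established.
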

\begin{proof}
The algorithm is similar to the algorithm for 
the cubic case.   

\medskip
Let $G$ be a Halin graph, let $T$ be the underlying plane tree,  
and let $C$ be the cycle that connects the leaves of $T$. 
Since $L(T)^2$ is chordal the chromatic number of $L(T)^2$ 
is equal to the 
clique number of $L(T)^2$, which is 
\[s\chi^{\prime}(T)= \max \;\{\;d(u)+d(v)-1\;|\; (u,v) \in E(T)\;\},\]
where $d(u)$ is the degree of $u$ in the tree $T$. 
By Formula~(\ref{eq1}) and Lemma~\ref{basics} 
the strong chromatic index of $G$ is one of 
the six possible values%
\footnote{Actually, according to the recent results of 
Ping-Ying Tsai, {\em et al.\/}, 
the strong chromatic index of $G$ is at most $s\chi^{\prime}(T)+3$ 
except when $G$ is a wheel or $\Bar{C_6}$.}
\[s\chi^{\prime}(T), s\chi^{\prime}(T)+1, \ldots, s\chi^{\prime}(T)+5.\]

\medskip
Root the tree at some leaf $r$ and consider a subtree $T_x$ at a 
node $x$ of $T$. 
Let $H(x)$ be the subgraph of $G$ induced by the vertices of $T_x$.  
Let $y$ and $z$ be the two boundary vertices of $H(x)$ in $C$. 

\medskip
We distinguish the following six types of edges corresponding 
to $H(x)$. 
\begin{enumerate}[\rm 1.]
\item The set of edges in $T_x$ that are adjacent to $x$. 
\item The edge that connects $x$ to its parent in $T$. 
\item The edge that connects $y$ to its neighbor in $C$ that is 
not in $T_x$. 
\item The set of edges in $H(x)$ that have endpoint $y$. 
\item The edge that connects $z$ to its neighbor in $C$ that is 
not in $T_x$. 
\item The set of edges in $H(x)$ that have endpoint $z$. 
\end{enumerate}
When $x$ is adjacent to $y$ then we make a separate type 
for the edge $(x,y)$ and similar in the case where $x$ is 
adjacent to $z$. 

Notice that the set of edges of every type has bounded 
cardinality, except the first type. 

\medskip
Consider a $0/1$-matrix $M$ with rows indexed by the six 
to eight  
types of edges and columns indexed by the colors. 
A matrix entry $M_{ij}$ is 1 if there is an edge of the row-type 
$i$ that is colored with the color $j$ and otherwise this 
entry is 0. Since $M$ has at most 8 rows, the rank over 
$GF[2]$ of $M$ is at most 8.  

\medskip 
Two colorings are equivalent if there is a 
permutation of the colors that maps one coloring to the other one. 
Let $S \subseteq \{1,\ldots,8\}$ and let $W(S)$ be the 
set of colors that are used by 
edges of type $i$ for all $i \in S$.  
A class of equivalent colorings is fixed by the set 
of cardinalities  
\[\{\; |W(S)| \;|\; S \subseteq \{1,\ldots,8\} \;\}.\]

\medskip
We claim that the number of equivalence classes is 
constant. The number of ones in the row of the first type 
is the degree of $x$ in $H(x)$. Every other row has at most 
3 ones. This proves the claim. 

\medskip 
Consider the union of two subtrees, say at $x$ and $x^{\prime}$. 
The algorithm considers all equivalence classes of colorings of the 
union, and checks, by table look-up, whether it decomposes into 
valid colorings of $H(x)$ and $H(x^{\prime})$. An easy way to do this 
is as follows. First double the number of types, by distinguishing 
the edges of $H(x)$ and $H(x^{\prime})$. Then enumerate 
all equivalence classes of colorings. Each equivalence class 
is fixed by a sequence of $2^{16}$ numbers, as above. 
By table look-up,  
check if an equivalence class restricts to a valid coloring 
for each of $H(x)$ and $H(x^{\prime})$.  
Since this takes constant time, the algorithm runs in linear time. 

This proves the theorem. 
\qed\end{proof}

\section{Distance-hereditary graphs}
\label{section dh}
%%%%%%%%%%%%%%%%%%%%%%%%%%%%%%%%%%%%

\begin{definition}[\cite{kn:howorka}]
A graph $G$ is {\em distance hereditary\/} if any two 
nonadjacent vertices in a component of any induced subgraph $H$ 
are at the same distance in $H$ as they are in the graph $G$. 
\end{definition}

In other words, any two chordless paths between two 
nonadjacent vertices is of the same length. 
Distance-hereditary graphs are exactly the graphs that have 
rankwidth one~\cite{kn:chang}. 
In this section we prove that there is a linear-time algorithm 
that computes the strong chromatic index of distance-hereditary 
graphs. 
Distance-hereditary graphs are perfect. They are the graphs without 
induced gem, house, hole or domino. Cameron proves in~\cite{kn:cameron3} 
that, for $k \geq 4$, if $G$ has no induced cycles of length more 
than four then also $L(G)^2$ has no such induced cycles. It follows 
that, if $G$ is distance hereditary then $L(G)^2$ is perfect. 
Therefore, to compute the chromatic number of 
$L(G)^2$ it suffices to compute the clique number. 

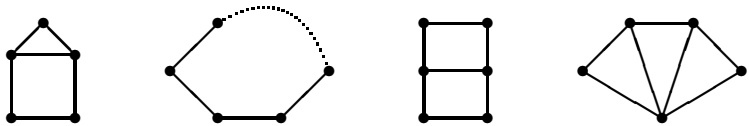
\begin{figure}[htbp]
\setlength{\unitlength}{1.2pt}
\begin{center}
\begin{picture}(230,35)
\thicklines
%house
\put(0,0){\circle*{3.0}} \put(20,0){\circle*{3.0}}
\put(0,20){\circle*{3.0}} \put(20,20){\circle*{3.0}}
\put(10,30){\circle*{3.0}} \put(0,0){\line(1,0){20}}
\put(0,0){\line(0,1){20}} \put(20,20){\line(-1,0){20}}
\put(20,20){\line(0,-1){20}} \put(10,30){\line(-1,-1){10}}
\put(10,30){\line(1,-1){10}}
%hole
\put(50,15){\circle*{3.0}} \put(65,0){\circle*{3.0}}
\put(65,30){\circle*{3.0}} \put(85,0){\circle*{3.0}}
\put(100,15){\circle*{3.0}} \put(50,15){\line(1,-1){15}}
\put(50,15){\line(1,1){15}} \put(65,0){\line(1,0){20}}
\put(85,0){\line(1,1){15}} \qbezier[25](65,30)(90,45)(100,15)
%domino
\put(130,0){\circle*{3.0}}\put(150,0){\circle*{3.0}}
\put(130,15){\circle*{3.0}}\put(150,15){\circle*{3.0}}
\put(130,30){\circle*{3.0}}\put(150,30){\circle*{3.0}}
\put(130,0){\line(0,1){30}} \put(150,0){\line(0,1){30}}
\put(130,0){\line(1,0){20}}\put(130,15){\line(1,0){20}}
\put(130,30){\line(1,0){20}}
%gem
\put(180,15){\circle*{3.0}} \put(195,30){\circle*{3.0}}
\put(205,0){\circle*{3.0}} \put(215,30){\circle*{3.0}}
\put(230,15){\circle*{3.0}} \put(205,0){\line(-5,3){25}}
\put(205,0){\line(-1,3){10}}\put(205,0){\line(1,3){10}}
\put(205,0){\line(5,3){25}} \put(180,15){\line(1,1){15}}
\put(195,30){\line(1,0){20}} \put(215,30){\line(1,-1){15}}
\end{picture}
\caption{A graph is distance hereditary if it has no induced
house hole, domino or gem.}
\label{HHDG}
\end{center}
\end{figure}

A pendant vertex in a graph is a vertex of degree one. 
A twin is a pair of vertices $x$ and $y$ with 
the same open or the same closed neighborhood. 
When $x$ and $y$ are adjacent then the twin is called 
a true twin and otherwise it is called a false twin. 
A $P_4$ is a path with four vertices. 

\begin{theorem}[\cite{kn:bandelt}]
A graph $G$ is distance hereditary if and only if $G$ 
is obtained from an edge by a sequence of the following 
operations. 
\begin{enumerate}[\rm (a)]
\item Creation a pendant vertex. 
\item Creation of a twin. 
\end{enumerate}
\end{theorem}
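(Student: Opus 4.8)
The plan is to prove both implications by induction: the backward direction (every graph obtained this way is distance-hereditary) on the number of operations, and the forward direction (every distance-hereditary graph is obtained this way) on the number of vertices; along the backward induction one also records that the constructed graph is connected. Indeed $K_2$ is connected and distance-hereditary, adding a pendant vertex or a true twin to a connected graph keeps it connected, and a false twin of a vertex $u$ in a connected graph on at least two vertices inherits the nonempty neighbourhood $N(u)$, so the new vertex is attached as well. Together with the fact that the class of distance-hereditary graphs is closed under taking induced subgraphs (immediate from the definition), this lets me restrict attention to connected graphs throughout, with $K_2$ as the base case of both inductions.

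For the backward direction I would use the forbidden-induced-subgraph description recalled above: a graph is distance-hereditary if and only if it contains no induced house, hole (an induced cycle of length at least five), domino, or gem. Let $G$ be distance-hereditary, let $G'$ be obtained from $G$ by a single operation creating a new vertex $v$, and suppose $G'$ contains one of the four forbidden graphs $F$ as an induced subgraph. If $v\notin V(F)$ then $F$ is already induced in $G$, a contradiction, so $v\in V(F)$. If $v$ is a pendant vertex this is impossible, because $v$ has degree at most one in every induced subgraph of $G'$, whereas each of the house, every hole, the domino, and the gem has minimum degree at least two. If $v$ is a twin of a vertex $u$, then either $u\notin V(F)$, in which case replacing $v$ by $u$ inside $F$ yields an induced subgraph of $G$ isomorphic to $F$ (since $v$ and $u$ have the same neighbours outside $\{u,v\}$); or $u\in V(F)$, in which case $u$ and $v$ are still twins inside $F$, contradicting the fact that none of the four forbidden graphs admits a pair of true or false twins (each of them is a prime graph, hence has no module of size two). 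In every case we reach a contradiction, so $G'$ is distance-hereditary; since $K_2$ is, the backward direction follows by induction.

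For the forward direction it is enough to show that every connected distance-hereditary graph $G$ with at least three vertices has a pendant vertex or a twin: deleting such a vertex gives a connected distance-hereditary graph on at least two vertices, which by induction is obtained from $K_2$ by the two operations, and performing the corresponding last operation reconstructs $G$. To find the vertex I would use that distance-hereditary graphs are exactly the graphs of rankwidth one. Fix a rank-decomposition of $G$: a tree $\mathcal T$ in which every internal node has degree three, with a bijection between its leaves and $V(G)$, such that for each edge of $\mathcal T$ the bipartite adjacency matrix between the two parts into which it splits $V(G)$ has rank at most one over $GF(2)$. Since $|V(G)|\ge 3$, the tree $\mathcal T$ has a node adjacent to two leaves $a$ and $a'$; the third edge at that node splits off the set $\{a,a'\}$ from the remaining set $R$, so the two rows of the adjacency matrix indexed by $a$ and $a'$, namely the indicator vectors of $N(a)\cap R$ and $N(a')\cap R$, are linearly dependent over $GF(2)$. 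Over $GF(2)$ this forces $N(a)\cap R=N(a')\cap R$, or one of these sets to be empty. In the first case $a$ and $a'$ have identical neighbourhoods outside $\{a,a'\}$, hence are true twins if $a\sim a'$ and false twins otherwise; in the second case, say $N(a)\cap R=\varnothing$, connectedness together with $|V(G)|\ge 3$ forces $N(a)=\{a'\}$, so $a$ is a pendant vertex.

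The step I expect to be the main obstacle lies in the forward direction. The argument above relies on the equivalence ``distance-hereditary $\Leftrightarrow$ rankwidth one'' quoted earlier; if one insisted on a self-contained proof, one would instead root a breadth-first search at an arbitrary vertex and argue that two vertices of the last layer sharing a neighbour in the previous layer must have exactly the same neighbourhood there, and that within the last layer the vertices with a common previous-layer neighbourhood then visibly contain twins. Showing that an induced house, gem, or domino necessarily appears when this fails is the genuine work and is the delicate point. The remaining obligations, namely that deleting a pendant vertex or a twin preserves connectivity (the twin case by rerouting paths through the surviving twin) and keeps us inside the class, are routine but should be checked so that the induction closes.
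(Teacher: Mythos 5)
This theorem is not proved in the paper at all: it is quoted from Bandelt and Mulder, so there is no internal proof to compare against. Judged on its own terms, your argument is essentially sound, but it is a \emph{derivation from two other quoted characterizations} rather than a from-scratch proof. The backward direction via the house/hole/domino/gem obstruction set is correct (pendant vertices have degree at most one, and none of the four obstructions contains a twin pair), and the forward direction is a nice observation: in a width-one rank-decomposition an internal node whose two children are leaves $a,a'$ forces $N(a)\setminus\{a,a'\}$ and $N(a')\setminus\{a,a'\}$ to be equal or one of them empty, yielding a twin pair or a pendant vertex, after which the induction closes (you correctly note the theorem implicitly concerns connected graphs, since the two operations never create a disconnected graph). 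What this buys is brevity; what it costs is self-containedness: the two inputs you use --- the forbidden-subgraph characterization and the equivalence with rankwidth one --- are at least as deep as the statement being proved, and the first comes from the very same Bandelt--Mulder paper, so as a standalone proof the route is partly circular. The original proof instead runs an induction directly on the metric condition, using a BFS hanging to locate a vertex in the last layer that is pendant or has a twin, which is essentially the fallback you sketch in your last paragraph and is where the real work lies.

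Two small repairs. First, your justification that the obstructions have no twins via primality is wrong for the gem: the gem has a universal vertex, so the four $P_4$-vertices form a nontrivial module and the gem is not prime; the fact you need (no module of size two in any of the four graphs) is still true, but should be checked directly. Second, in the forward direction you should say explicitly that linear dependence over $GF(2)$ of two \emph{rows} of a rank-$\le 1$ matrix means the rows are equal or one is zero; that is the step that converts width one into ``twin or pendant,'' and as you state it the case analysis is complete.
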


\begin{lemma}
\label{false twin}
Let $G$ be a graph and consider the graph $G^{\prime}$ 
obtained from $G$ by creating a false twin $x^{\prime}$ 
of a vertex $x$ in $G$. 
Then $L(G^{\prime})^2$ is obtained from $L(G)^2$ by a series 
of true twin operations. 
\end{lemma}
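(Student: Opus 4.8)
The plan is to exhibit, inside $L(G')^2$, each edge newly created at $x'$ as a true twin of the corresponding edge at $x$, and then to reconstruct $L(G')^2$ from $L(G)^2$ by inserting these twins one at a time. Write $N_G(x)=\{v_1,\dots,v_d\}$ and let $e_i=xv_i$ be the edges of $G$ incident with $x$. Since $x'$ is a false twin of $x$ we have $N_{G'}(x')=N_G(x)$ and $x'\not\sim x$, so the edges of $G'$ are exactly those of $G$ together with the new edges $e_i'=x'v_i$ for $i=1,\dots,d$; in particular $V(L(G')^2)=E(G)\cup\{e_1',\dots,e_d'\}$.

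First I would check that $L(G')^2$ restricted to $E(G)$ equals $L(G)^2$. Two edges $e,f\in E(G)$ are adjacent in $L(H)^2$ exactly when they lie at distance at most one in $H$, that is, when they share an endpoint or some edge of $H$ joins an endpoint of $e$ to an endpoint of $f$. The only edges of $G'$ not present in $G$ are the $e_j'$, and each of them has the new vertex $x'$ as an endpoint; since $x'$ is an endpoint of no edge of $G$, none of the $e_j'$ can be such a connecting edge. Hence the distance-at-most-one relation on $E(G)$ is unchanged in passing from $G$ to $G'$, which gives the identity.

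The heart of the argument is the claim that, for each $i$, the vertex $e_i'$ is a true twin of the vertex $e_i$ in $L(G')^2$; the main obstacle, though a mild one, is the case analysis that verifies it. The edges $e_i$ and $e_i'$ share the endpoint $v_i$, so $e_i\sim e_i'$, and it remains to show that every further vertex $f\in E(G')$ is adjacent to $e_i$ precisely when it is adjacent to $e_i'$. If $f=e_j$ or $f=e_j'$ for some $j\ne i$, then one checks directly (for instance $e_i'$ and $e_j$ are joined via the edge $x'v_j$, and $e_i$ and $e_j'$ via the edge $xv_j$) that $f$ is adjacent to both $e_i$ and $e_i'$, noting also $e_i'\sim e_j'$ since they share $x'$. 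If $f=ab$ is incident with neither $x$ nor $x'$, then $f$ lies at distance at most one from $e_i=xv_i$ iff $a=v_i$ or $b=v_i$ or one of $xa$, $xb$, $v_ia$, $v_ib$ is an edge; because $N_{G'}(x')=N_G(x)$ and $G'$ induces $G$ on $V(G)$, replacing $x$ by $x'$ throughout leaves this condition unchanged, so $f\sim e_i\iff f\sim e_i'$. This establishes the claim.

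It remains to assemble the construction. Start from $L(G)^2$, which by the second step is the subgraph of $L(G')^2$ induced on $E(G)$, and successively adjoin $e_1',e_2',\dots,e_d'$, each time as a true twin of the corresponding $e_i$. Here one uses that the true-twin relation of $L(G')^2$ found above restricts to every induced subgraph of $L(G')^2$ containing both $e_i$ and $e_i'$: hence, after adjoining $e_1',\dots,e_{k-1}'$ one has produced exactly the subgraph of $L(G')^2$ induced on $E(G)\cup\{e_1',\dots,e_{k-1}'\}$, and adjoining $e_k'$ as a true twin of $e_k$ yields the subgraph induced on $E(G)\cup\{e_1',\dots,e_k'\}$. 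After $d$ steps this is $L(G')^2$ on its full vertex set $E(G')$, which proves the lemma.
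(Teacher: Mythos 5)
Your proof is correct and follows essentially the same route as the paper: the paper's (much terser) argument also rests on the single observation that each new edge $(x',a_i)$ is a true twin of $(x,a_i)$ in $L(G')^2$. You merely supply the details the paper leaves implicit — that $L(G')^2$ induced on $E(G)$ is $L(G)^2$, the case analysis verifying the twin claim, and the step-by-step adjunction of the new vertices.
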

\begin{proof}
Let $a_1,\ldots,a_s$ be the neighbors of $x$ in $G$. 
By definition of 
$L(G)^2$, each edge $(x^{\prime},a_i)$ is a true twin of 
the edge $(x,a_i)$ in $L(G^{\prime})^2$.  
\qed\end{proof}

\begin{definition}
A graph $G$ is a {\em cograph\/} if $G$ has no 
induced $P_4$. 
\end{definition}

A cograph is obtained from a graph consisting of one vertex 
by a series of twin operations. 
Chordal cographs are the graphs without induced $P_4$ and $C_4$. 
These are also called trivially perfect. 

\begin{lemma}
\label{cograph}
If $G$ is a cograph then $L(G)^2$ is trivially perfect.  
\end{lemma}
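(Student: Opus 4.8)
The plan is to induct on $|V(G)|$, using the description of cographs recorded above: every cograph is obtained from $K_1$ by a sequence of twin creations. If $|V(G)|=1$ then $L(G)^2$ has no vertices and is vacuously trivially perfect, which gives the base case. For the inductive step, let $G'$ be the cograph in question and let $G$ be the cograph from which $G'$ is obtained by creating a twin $x'$ of a vertex $x$; by the induction hypothesis $L(G)^2$ is trivially perfect, that is, it contains no induced $P_4$ and no induced $C_4$.

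Suppose first that $x'$ is a false twin. By Lemma~\ref{false twin}, $L(G')^2$ is obtained from $L(G)^2$ by a sequence of true-twin creations, so it suffices to observe that the class of trivially perfect graphs is closed under the creation of true twins. This is immediate: if $u'$ is a true twin of $u$ then $N[u']=N[u]$, hence any induced subgraph of the enlarged graph that uses $u'$ but not $u$ is isomorphic (via $u'\mapsto u$) to an induced subgraph of the original graph, so no new induced $P_4$ or $C_4$ can appear. This settles the false-twin case.

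It remains to treat the case where $x'$ is a true twin of $x$, which is the crux of the argument. Let $a_1,\dots,a_s$ be the neighbours of $x$ in $G$. As in the proof of Lemma~\ref{false twin}, for every $i$ the edge $(x',a_i)$ of $G'$ is a true twin of the edge $(x,a_i)$ in $L(G')^2$: the two edges share the endpoint $a_i$, and $N_{G'}[x']=N_{G'}[x]$ forces their closed neighbourhoods in $L(G')^2$ to coincide. Hence $L(G')^2$ is obtained from $L(G)^2$ by these true-twin creations together with the adjunction of a single further vertex, the edge $(x,x')$, whose neighbourhood in $L(G')^2$ consists exactly of the edges of $G'$ incident with a vertex of $\{x,x'\}\cup\{a_1,\dots,a_s\}$. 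The plan is to show that adjoining this one vertex to a trivially perfect graph still cannot create an induced $P_4$ or $C_4$, by exploiting the cograph structure of $G$ around $x$ (in particular that $\{x,x'\}$ is a module of $G'$). I expect this to be the main obstacle: unlike the $(x',a_i)$, the edge $(x,x')$ is a genuinely new vertex of $L(G')^2$ with no twin among the old ones, so it cannot be absorbed into Lemma~\ref{false twin} and must be handled by a global argument rather than a purely local line-graph manipulation. Everything else --- the base case, the false-twin case, and the twin transfer for the edges $(x',a_i)$ --- is routine.
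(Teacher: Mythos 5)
Your induction is incomplete precisely at the point you yourself flag as the crux. The base case and the false-twin case are fine (the latter needs the small additional remark that neither $P_4$ nor $C_4$ contains two adjacent vertices with equal closed neighbourhoods, so no forbidden subgraph can use both members of a true twin; combined with your $u'\mapsto u$ substitution this gives closure of trivially perfect graphs under true-twin creation). But in the true-twin case, after correctly reducing everything to adjoining the single new vertex $(x,x')$ to a graph already known to be trivially perfect, you only announce a ``plan''. This step is not routine and does not follow from generalities: adding one vertex to a trivially perfect graph can easily create a $C_4$ or a $P_4$ (attach a new vertex to the two ends of a $P_3$), so everything hinges on the specific neighbourhood of $(x,x')$ in $L(G')^2$, and that vertex need not even be universal in its component --- take $G'$ to be the join of a single vertex $w$ with the disjoint union of the two edges $(x,x')$ and $(u,v)$; then $(u,v)$ lies in the same component of $L(G')^2$ as $(x,x')$ but is not adjacent to it in $L(G')^2$. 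So the part of the argument that carries the actual content of the lemma is missing.

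For comparison, the paper avoids this difficulty by recursing on the cotree rather than on twin creations: a connected cograph is the join of two cographs $G_1$ and $G_2$, the set of cross edges is a clique in $L(G)^2$, and each cross edge is adjacent in $L(G)^2$ to every edge lying inside $G_1$ or inside $G_2$; hence every component of $L(G)^2$ has a universal vertex, and recursing into $G_1$ and $G_2$ shows the same for the connected induced subgraphs arising along the decomposition. By Wolk's characterization \cite{kn:wolk} these are exactly the graphs with no induced $P_4$ or $C_4$, i.e., the trivially perfect graphs. If you wish to keep your twin-based induction, you must actually carry out the analysis of the vertex $(x,x')$ --- for instance, show that an induced $P_4$ or $C_4$ of $L(G')^2$ through $(x,x')$ would force an induced $P_4$ in the cograph $G'$, using that $\{x,x'\}$ is a module and that every edge incident with $x$, $x'$ or a neighbour of $x$ is adjacent to $(x,x')$; otherwise the join/union recursion is the cleaner repair.
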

\begin{proof}
A cograph with at least two vertices is either the 
join or the union of two cographs $G_1$ and $G_2$. 
Assume that $G$ is the join of two cographs $G_1$ and $G_2$. 
The set of edges with one endpoint in $G_1$ and the other in 
$G_2$ are a clique in $L(G)^2$. Furthermore, this set of 
edges is adjacent to every edge that is 
contained in $G_i$ for $i \in \{1,2\}$. 
In other words, every component of $L(G)^2$ has a universal vertex, 
{\em i.e.\/}, a vertex adjacent to all other vertices. 
The graphs that satisfy this property are exactly the graphs 
in which every component is the comparability graph of a tree and 
these are exactly the graphs without 
induced $P_4$ and $C_4$~\cite{kn:wolk}.  
\qed\end{proof}
 
Notice that Lemma~\ref{cograph} provides a linear-time 
algorithm for computing the strong chromatic index 
of cographs. A cotree decomposition can be obtained in 
linear time. Assume that $G$ is the join of two cographs 
$G_1$ and $G_2$. Then every edge with both ends in $G_1$ 
is adjacent in $L(G)^2$ 
to every edge with both ends in $G_2$. 
Let $X$ be the set of edges 
with one endpoint in $G_1$ and the other endpoint in 
$G_2$. By dynamic programming on the cotree, 
compute the clique numbers 
of $L(G_1)^2$ and $L(G_2)^2$. Add $|X|$ to the sum of both. 
If $G$ is the union of $G_1$ and $G_2$ then the 
strong chromatic index of $G$ is the maximum of the 
clique numbers of 
$L(G_1)^2$ and $L(G_2)^2$. 
This proves the following theorem. 

\begin{theorem}
There exists a linear-time algorithm that computes the 
strong chromatic index of cographs. 
\end{theorem}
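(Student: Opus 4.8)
The plan is to convert the discussion preceding the statement into a dynamic program on the cotree. First I would recall that a cotree of $G$ can be computed in linear time, and that by Lemma~\ref{cograph} the graph $L(G)^2$ is trivially perfect, so $s\chi^{\prime}(G)=\chi(L(G)^2)=\omega(L(G)^2)$. Hence it suffices to compute $\omega(L(G)^2)$ in linear time by processing the cotree bottom-up.

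The key structural observation is how $L(G)^2$ behaves under the two cograph operations. If $G$ is the disjoint union of $G_1$ and $G_2$, then no edge of $G_1$ is at distance at most one in $G$ from an edge of $G_2$, so $L(G)^2$ is the disjoint union of $L(G_1)^2$ and $L(G_2)^2$, whence $\omega(L(G)^2)=\max\{\omega(L(G_1)^2),\omega(L(G_2)^2)\}$. If $G$ is the join of $G_1$ and $G_2$, partition $E(G)$ into the edges lying inside $G_1$, the edges lying inside $G_2$, and the set $X$ of edges with one endpoint in each; then $|X|=|V(G_1)|\cdot|V(G_2)|$. Because every vertex of $G_1$ is adjacent to every vertex of $G_2$, any two edges of $X$ are at distance at most one in $G$, every edge of $X$ is at distance at most one from every edge inside $G_1$ or inside $G_2$, and every edge inside $G_1$ is at distance at most one from every edge inside $G_2$. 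Moreover, two edges both inside $G_1$ are at distance at most one in $G$ precisely when they are at distance at most one in $G_1$, since a length-one connection between two vertices of $G_1$ is itself an edge of $G_1$. Therefore $L(G)^2$ is the join of $L(G_1)^2$, $L(G_2)^2$ and a clique on $X$, so $\omega(L(G)^2)=\omega(L(G_1)^2)+\omega(L(G_2)^2)+|V(G_1)|\cdot|V(G_2)|$.

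With these two identities in hand, the algorithm traverses the cotree bottom-up, storing at each node $t$ the pair $\bigl(|V(G_t)|,\,\omega(L(G_t)^2)\bigr)$, where $G_t$ is the cograph rooted at $t$. A leaf stores $(1,0)$; a union node adds the vertex counts of its children and takes the maximum of their clique numbers; a join node adds the vertex counts and sets the clique number to the sum of the children's clique numbers plus the product of their vertex counts. The value stored at the root equals $\omega(L(G)^2)=s\chi^{\prime}(G)$. Each node is processed in constant time and the cotree has $O(n)$ nodes, so the whole computation runs in linear time.

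I expect the only point requiring genuine care is the join case: one must verify that inside $L(G)^2$ the subgraph induced on the edges lying entirely in $G_i$ is exactly $L(G_i)^2$, i.e.\ that distances are not shortened by routing through the other side, and one must keep the vertex counts alongside the clique numbers so that $L(G)^2$ — which may have quadratically many vertices — is never materialised. Both are routine, and this completes the proof.
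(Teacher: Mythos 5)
Your proposal is correct and follows essentially the same route as the paper: using Lemma~\ref{cograph} to reduce the problem to computing $\omega(L(G)^2)$, and then a bottom-up dynamic program on the cotree that takes the maximum of the children's clique numbers at a union node and the sum of the children's clique numbers plus $|X|=|V(G_1)|\cdot|V(G_2)|$ at a join node. Your explicit verification that edges inside $G_i$ induce exactly $L(G_i)^2$ and that vertex counts must be carried along is a slightly more careful write-up of what the paper leaves implicit, but it is the same algorithm.
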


\begin{lemma}
\label{DH}
If $G$ is distance hereditary then every 
neighborhood in $L(G)^2$ induces a trivially perfect graph. 
\end{lemma}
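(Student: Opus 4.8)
The plan is to reduce the claim to Lemma~\ref{cograph} by showing that a neighborhood in $L(G)^2$ is the square of the linegraph of a cograph. Fix an edge $e=(u,v)$ of $G$; the closed neighborhood of $e$ in $L(G)^2$ consists of all edges of $G$ that are at distance at most one from $e$, that is, all edges incident with a vertex in $N_G(u)\cup N_G(v)\cup\{u,v\}$ together with the edges having one endpoint in $N_G(u)$ and the other in $N_G(v)$. The first step is to identify the right subgraph of $G$ whose linegraph-square realizes this neighborhood. A natural candidate is the subgraph $H$ of $G$ induced by $\{u,v\}\cup N_G(u)\cup N_G(v)$ together with an extra half-edge marker at each vertex of $N_G(u)\cup N_G(v)$ that has further neighbors outside; but a cleaner route is to work with $H=G[\{u,v\}\cup N_G(u)\cup N_G(v)]$ directly and handle the pendant edges leaving this set by the twin/pendant machinery of Lemma~\ref{false twin} and the Bandelt--Mulder decomposition.

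Concretely, I would first argue that $H$ is itself distance hereditary (hereditary classes are closed under induced subgraphs) and, crucially, that $H$ has no induced $P_4$, i.e.\ is a cograph. This is the structural heart of the argument: in a distance-hereditary graph, the subgraph induced by two adjacent vertices and all their neighbors cannot contain an induced $P_4$, because a $P_4$ together with the edge $uv$ and the attachment edges to $u$ and $v$ would force one of the forbidden induced subgraphs (house, gem, domino, or a $C_5$/$C_6$), contradicting Figure~\ref{HHDG}. I would carry out this case analysis by taking an induced $P_4$ on vertices $a,b,c,d$ inside $N_G(u)\cup N_G(v)\cup\{u,v\}$, recording which of $a,b,c,d$ are adjacent to $u$ and which to $v$, and in each case exhibiting the forbidden configuration. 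Once $H$ is known to be a cograph, Lemma~\ref{cograph} gives that $L(H)^2$ is trivially perfect.

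The remaining step is to pass from $L(H)^2$ to the neighborhood of $e$ in $L(G)^2$. The edges of $G$ at distance at most one from $e$ but not lying inside $H$ are exactly the pendant edges from a vertex $w\in N_G(u)\cup N_G(v)$ to a vertex outside $\{u,v\}\cup N_G(u)\cup N_G(v)$; in $L(G)^2$ each such edge is adjacent to the same set of edges of $H$ as some edge already present at $w$ inside $H$, so adding these edges corresponds to a sequence of true-twin operations on $L(H)^2$, exactly as in Lemma~\ref{false twin}. Since trivially perfect graphs are closed under true twins (they are the $P_4$-free, $C_4$-free graphs, and neither forbidden graph is created by a true twin), the neighborhood of $e$ in $L(G)^2$ is trivially perfect. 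As this holds for every edge $e$, every neighborhood in $L(G)^2$ induces a trivially perfect graph.

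The main obstacle I anticipate is the second step: proving that $G[\{u,v\}\cup N_G(u)\cup N_G(v)]$ is a cograph. The case analysis on how a putative induced $P_4$ attaches to $u$ and $v$ is finite but somewhat delicate, and one must be careful about vertices that are adjacent to both $u$ and $v$ and about the possibility that some of $a,b,c,d$ coincide with $u$ or $v$; getting a clean uniform argument (rather than a long enumeration) is where the real work lies, and it is what genuinely uses the full forbidden-subgraph characterization of distance-hereditary graphs rather than just one excluded subgraph.
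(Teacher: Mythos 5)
Your reduction breaks down at both of its key steps, so there is a genuine gap. First, the ``structural heart'' is false: for adjacent $u,v$ in a distance-hereditary graph, the induced subgraph $H=G[\{u,v\}\cup N_G(u)\cup N_G(v)]$ need not be a cograph. Take $G=P_4$ with vertices $a,b,c,d$ and $e=(b,c)$: paths are distance hereditary (no house, hole, domino or gem), yet $H$ is the whole $P_4$. More generally any induced path through the edge $e$ survives in $H$, so no case analysis on forbidden subgraphs can rescue the claim, and Lemma~\ref{cograph} cannot be invoked for $H$. The trivially perfect structure of the neighborhood of $e$ does not come from $H$ itself being $P_4$-free; in the paper it comes from a coarser quotient: one partitions the neighbors of the relevant vertex into equivalence classes according to their neighborhoods outside the closed neighborhood, and the house/hole/domino/gem-freeness is used to show that these classes are pairwise joined or disjoint, that the class graph is a cograph, and that the neighborhoods of the outside components form a laminar family which yields the nested (cotree) structure.

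Second, your passage from $L(H)^2$ to the neighborhood of $e$ in $L(G)^2$ is also incorrect: an edge $f=(w,z)$ with $w\in N_G(u)\cup N_G(v)$ and $z$ outside is in general \emph{not} a true twin (inside the neighborhood of $e$) of an $H$-edge at $w$. In the distance-hereditary graph $P_5$ with vertices $z,w,u,v,p$ and $e=(u,v)$, the neighborhood of $e$ in $L(G)^2$ is $\{(w,z),(u,w),(v,p)\}$; here $(u,w)$ is adjacent to $(v,p)$ but $(w,z)$ is not, so $(w,z)$ is no twin of $(u,w)$, and Lemma~\ref{false twin}-style reasoning does not apply. These outgoing edges are precisely the delicate part of the paper's proof (its type of edges with one endpoint in the neighborhood and one outside), and they are handled there by the component/equivalence-class analysis sketched above, not by a twin operation; only the edges incident with the distinguished vertex itself form the universal clique. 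So while the statement you are proving is true, the proposed mechanism does not establish it, and repairing it essentially requires redoing the paper's equivalence-class argument.
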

\begin{proof}
We prove the theorem by induction on the elimination 
ordering of $G$ by pendant vertices and elements of twins. 

First, assume that $G^{\prime}$ is 
obtained from $G$ by creating a false twin 
$x^{\prime}$ of a vertex $x$ in $G$. 
By 
Lemma~\ref{false twin} $L(G^{\prime})^2$ is obtained 
from $L(G)^2$ by a series of true twin operations. 
In that case the claim follows easily, by induction.  

Secondly, 
consider the operation which adds a pendant vertex $x^{\prime}$, 
made adjacent to a vertex $x$ in $G$. Let $a_1,\ldots,a_s$ 
be the neighbors of $x$ in $G$.  
Notice that the adjacencies of the edge $(x,x^{\prime})$ in 
$L(G^{\prime})^2$ are of 
the following types of edges in $G$. 
\begin{enumerate}[\rm (a)]
\item 
\label{a}
All edges $(x,a_i)$, $i \in \{1,\ldots,s\}$. 
Call this set of edges $X$. 
\item
\label{b}
 The edges $(a_i,a_j) \in E(G)$, for $i,j \in \{1,\ldots,s\}$. 
\item
\label{c} 
Edges $(a_i,u)$, for $i \in \{1,\ldots,s\}$ and 
$u \in N_G(a_i) \setminus N_G[x]$. 
\end{enumerate}
Call two vertices in $N_G(x)$ equivalent if they have 
the same neighbors in the graph $G-N_G[x]$. Since there is no 
house, hole, domino or gem every equivalence class 
is joined to or disjoint from every other equivalence class. 
Let $H$ be the graph with vertex set the set of 
equivalence classes and edge set the 
pairs of equivalence classes that are joined. 
Since $G$ has no gem, the graph $H$ 
has no induced $P_4$ and so it is a cograph. 
Furthermore, by Lemma~\ref{cograph} $L(H)^2$ is 
trivially perfect. 
 
Consider the components of $G-N_G[x]$. For 
any two components 
$C_1$ and $C_2$ their neighborhoods $N_G(C_1)$ and 
$N_G(C_2)$ are either disjoint or ordered by 
inclusion. 
First consider the components that have a 
maximal neighborhood in $N_G(x)$ and remove all 
other components. Consider the equivalence 
classes defined by these components. The graph on these 
equivalence classes is a cograph and the square of the 
linegraph is a chordal cograph.      
Next, consider such an equivalence class $Q$ 
with at least two vertices. Remove the components $C$ of 
$G-N_G[x]$ with $N(C)=Q$. If there are some 
components left of which the neighborhood is 
properly contained in $Q$ then partition the vertices 
of $Q$ into secondary equivalence classes. If there are 
no more components with their neighborhood 
contained in $Q$ then define the secondary equivalence 
classes as sets of single vertices. As above, for each 
equivalence class $Q$,  
the secondary equivalence classes form a cograph $H_Q$. 
Also, $L(H_Q)^2$ is trivially perfect. Continuation of this  
process defines a chordal cotree on the subgraph of 
$L(G)^2$ induced by the edges of 
types~(\ref{b}) and~(\ref{c}). 
Notice that the set $X$ of edges is universal 
in the neighborhood of $(x,x^{\prime})$ in $L(G)^2$.    

Finally, consider the case where $G^{\prime}$ is 
obtained from $G$ by creating a true twin $x^{\prime}$ 
of a vertex $x$ in $G$. Subdivide this operation into 
two steps. First create a false twin. Let $G^{\ast}$ 
be the graph obtained in this manner. We proved above 
that every neighborhood in $L(G^{\ast})^2$ is 
trivially perfect. Secondly, adding the edge $(x,x^{\prime})$ 
to $G^{\ast}$ is similar to the operation of adding a 
pendant vertex. The set $X$ of edges 
as described above, now consists of pairs of true twins 
in $L(G)^2$. 
The other types of adjacencies of $(x,x^{\prime})$,  
as described in~(\ref{b}) and~(\ref{c}),  
are the same as above.   

This proves the lemma.  
\qed\end{proof}

\begin{theorem}
There exists a linear-time algorithm that computes 
the strong chromatic index of distance-hereditary graphs. 
\end{theorem}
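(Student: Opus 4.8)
The plan is to reduce the problem to a clique-number computation and then to run a dynamic program along a pruning sequence of $G$.

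Since $G$ is distance hereditary, $L(G)^2$ is perfect (as observed above, using Cameron's result), hence $s\chi^{\prime}(G)=\chi(L(G)^2)=\omega(L(G)^2)$, and it suffices to compute $\omega(L(G)^2)$ in linear time. By the characterization~\cite{kn:bandelt} recalled above, $G$ is built from a single edge by a sequence of pendant-vertex and twin creations; such a sequence $v_1,\dots,v_n$ (write $G_i:=G[\{v_1,\dots,v_i\}]$) can be computed in linear time. I would process $G_2,G_3,\dots,G_n$ in order, maintaining $\omega(L(G_i)^2)$ together with, for every still-\emph{active} vertex $x$ (one that may still receive a pendant vertex or a twin later in the sequence), a compact summary of the trivially perfect graph produced in the proof of Lemma~\ref{DH} for a pendant edge at $x$: namely the rooted forest --- the ``chordal cotree'' on the edges of types~(\ref{b}) and~(\ref{c}) --- recorded only through its height and the heights of its topmost subtrees, together with the number $\deg_{G_i}(x)$ of edges of type~(\ref{a}).

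The structural core is the behavior of cliques under one step. If $v_{i+1}$ is a false twin of $x$, then by Lemma~\ref{false twin} $L(G_{i+1})^2$ is obtained from $L(G_i)^2$ by creating true twins of the edges incident with $x$; since creating a true twin of a vertex that lies in some maximum clique raises the clique number by one and creating a true twin of any other vertex leaves it unchanged, and since all these new twins are pairwise adjacent and each is adjacent to its original, the new clique number is obtained from $\omega(L(G_i)^2)$ and from the largest clique of $L(G_i)^2$ using as many type-(\ref{a}) edges at $x$ as possible --- a quantity the summary of $x$ is designed to expose. If $v_{i+1}$ is a pendant vertex of $x$, or a true twin of $x$, then the only genuinely new edge is $e_0=(x,v_{i+1})$ (in the true-twin case the remaining new edges are true twins in $L(G_{i+1})^2$ of old edges incident with $x$, so they add nothing beyond the substitution just described), and a maximum clique of $L(G_{i+1})^2$ is either one of $L(G_i)^2$ (possibly with twins substituted) or consists of $e_0$ together with a maximum clique of the subgraph induced by $N_{L(G_i)^2}(e_0)$. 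By the proof of Lemma~\ref{DH} this subgraph is trivially perfect with the explicit chordal-cotree structure described there, in which the type-(\ref{a}) edges are universal; hence its clique number equals the height of the cotree plus the number of type-(\ref{a}) edges, and this is read off in constant time from the summary stored for $x$.

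The main obstacle is running the whole scan in \emph{total} linear time: the neighborhood $N_{L(G_i)^2}(e_0)$, and with it its chordal cotree, can be arbitrarily large, so it must be maintained incrementally rather than rebuilt at each step. Here one exploits the nesting used in the proof of Lemma~\ref{DH}: the neighborhoods in $N_G(x)$ of the components of $G-N_G[x]$ are pairwise either disjoint or nested, so the chordal cotree attached to $x$ is assembled by joining those of a bounded number of ``child'' active vertices, and a single pendant/twin step touches only $O(1)$ of the stored cotrees and enlarges their total size by $O(1)$; an amortization argument over the whole sequence then yields the linear time bound. Carrying out this amortized analysis, and checking the precise way twins substitute into maximum cliques in the two twin cases, is the technical heart of the argument; the rest is the routine bottom-up bookkeeping along the pruning sequence, in the spirit of the algorithms for Halin graphs and for cographs given above.
\qed
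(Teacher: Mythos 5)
Your reduction to $\omega(L(G)^2)$ via perfection and your incremental invariant (a maximum clique of $L(G_{i+1})^2$ either avoids the new edges or contains the new pendant/twin edge plus a clique in its $L(G_i)^2$-neighborhood, which Lemma~\ref{DH} shows is trivially perfect) are sound in outline, but the proof has a genuine gap exactly where you place "the technical heart": the linear time bound is asserted, not proved, and the maintenance claims it rests on are false as stated. The claim that one pendant/twin step "touches only $O(1)$ of the stored cotrees and enlarges their total size by $O(1)$" fails already for a star built by repeated pendant additions at one vertex $x$: each new pendant edge $(x,x^{\prime})$ is a type-(\ref{c}) edge for \emph{every} other neighbor $y$ of $x$ (it joins $x\in N(y)$ to $x^{\prime}\notin N[y]$ and can even split $y$'s equivalence classes), so $\Omega(d(x))$ summaries change per step and the naive total is $\Omega(n^2)$ unless you design a sharing/implicit representation for twin-like vertices --- which is precisely the missing construction. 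Two further points need repair: (i) the per-vertex summary "height of the cotree plus heights of topmost subtrees" is too weak, since the cotree nodes are equivalence classes carrying multiplicities of type-(\ref{b}) and (\ref{c}) edges, so the relevant quantity is a weighted depth; and (ii) for the false-twin step the correct update is $\max_C\{|C|+t\}$ over cliques $C$ of $L(G_i)^2$ containing $t$ edges incident with $x$ (by Lemma~\ref{false twin} the twins of exactly those $t$ edges can be added to $C$), which is not captured by "the largest clique using as many type-(\ref{a}) edges as possible": since all $d(x)$ edges at $x$ already form a clique, a smaller clique with more edges at $x$ can beat a maximum clique, so the summary must expose the combined maximum, not a lexicographic one.

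For comparison, the paper avoids the vertex-by-vertex incremental scheme altogether: after the same reduction to the clique number of $L(G)^2$, it runs a bottom-up dynamic program over a rankwidth-one (twinset) decomposition in the sense of Chang, Hsieh and Chen. At a join of twinsets $S_1,S_2$ with edge set $X$ between them it keeps, per side, a maximal clique size $\omega_i$ of edges within the twinset together with the number $N_i$ of outside neighbors of its end-vertices, and maximizes $|X|+\omega_1+\omega_2+N_1+N_2$. The twinset property --- all vertices of $S_e$ have identical neighborhoods outside $G_e$ --- is exactly what makes a constant-size record per decomposition node suffice, sidestepping the amortization problem that your pruning-sequence approach leaves open. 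If you want to salvage your route, you need either such a decomposition-based aggregation or an explicit twin-sharing data structure plus a proved charging argument for the pendant case.
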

\begin{proof}
Let $G$ be distance hereditary. Consider a rank decomposition 
of $G$ of rankwidth one. 
This is a pair $(T,f)$ where 
$T$ is a rooted binary tree and where $f$ is a bijection 
from the vertices in $G$ to the leaves of $T$. Consider a subtree 
$T_e$ of $T$ rooted at some edge $e$ of $T$. Define $G_e$ as 
the subgraph  
of $G$ 
induced by the vertices that are mapped to leaves in $T_e$. 
Let $S_e$ be the set of vertices of $G_e$ that have 
neighbors in $G-V(G_e)$. The set $S_e$ is called the twinset 
of $G_e$~\cite{kn:chang}. 
All vertices of $S_e$ have the same neighbors in 
$G-V(G_e)$. 

Consider an edge $e$ of $T$ and let $e_1$ and $e_2$ be the two 
children of $e$ in $T$. The graph $G_e$ is obtained from 
$G_{e_1}$ and $G_{e_2}$  
by a join or by a union of the twinsets $S_{e_1}$ and 
$S_{e_2}$. The twinset $S_e$ of $G_e$ is either one of 
$S_{e_1}$ and $S_{e_2}$ or it is the union of the two~\cite{kn:chang}. 

Let $e$ be a line in $T$ with children $e_1$ and $e_2$. 
Let $S_1$ and $S_2$ be the twinsets of $G_{e_1}$ and of $G_{e_2}$ 
and assume that there is a join between $S_1$ and $S_2$. 
Let $X$ be the set of edges between $S_1$ and $S_2$.  
For $i \in \{1,2\}$ 
choose a maximal clique $\Omega_i$ in each $L(S_i)^2$ such that 
the set of end-vertices of edges in $\Omega_i$ has a maximal number 
of neighbors in $G_{e_i}-S_i$. Let $\omega_i$ 
be the number of edges in this maximal clique. 
Let $N_i$ be this number of neighbors. 
The algorithm keeps track of the maximal value of 
$|X|+\omega_1+\omega_2+N_1+N_2$.   
It is easy to see that this algorithm can be implemented 
to run in linear time. 
\qed\end{proof}

\section{Maximal outerplanar graphs}
\label{section out}
%%%%%%%%%%%%%%%%%%%%%%%%%%%%%%%%%%%%

A maximal outerplanar graph $G$ is a ternary tree $T$ ({\em i.e.\/}, 
every vertex in $T$ has degree at most three) of triangles, 
where two triangles that are adjacent in the tree 
share an edge (see~{\em e.g.},~\cite{kn:kloks}).  

\medskip 

In~\cite{kn:hocquard} Hocquard, {\em et al.\/}, prove that 
for every outerplanar graph $G$ with maximal degree $\Delta \geq 3$, 
\[s\chi^{\prime}(G) \leq 3(\Delta-1).\]
They also prove, among various other 
NP-completeness results, that strong edge 4-coloring is NP-complete 
for planar bipartite graphs with maximal degree three and any 
fixed girth. 

\medskip 
 
\begin{definition}
An {\em extended triangle\/} in a maximal outerplanar graph 
consists of a triangle plus all the edges that are 
incident with some vertex of the triangle. 
\end{definition}
 
All edges of an extended triangle must be colored 
different. Let $\phi$ be the maximal number of 
edges of all extended triangles in $G$. 
In the following theorem we prove that 
there exists a strong edges coloring that uses $\phi$ colors. 

\begin{theorem}
There exists a linear-time algorithm that computes 
the strong chromatic index of maximal outerplanar graphs. 
\end{theorem}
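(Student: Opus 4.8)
The plan is to prove $s\chi^{\prime}(G)=\phi$ and to observe that $\phi$, together with a matching coloring, is obtained in linear time from the tree $T$ of triangles. The inequality $s\chi^{\prime}(G)\ge\phi$ is the one already isolated: the edges of an extended triangle are pairwise at distance at most one in $G$, hence they form a clique in $L(G)^2$. Note that $\phi=\max\{\,d(u)+d(v)+d(w)-3 \;:\; uvw\text{ is a triangle of }G\,\}$, and since the triangles of $G$ are exactly the nodes of $T$, this maximum is computed by a single pass over $T$ once the degrees have been read off; this is linear.

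For the reverse inequality I would use that a maximal outerplanar graph is chordal, so that $L(G)^2$ is chordal as well, by the result of Cameron cited above. Consequently $s\chi^{\prime}(G)=\chi(L(G)^2)=\omega(L(G)^2)$, and it suffices to show $\omega(L(G)^2)\le\phi$; that is, every set of pairwise conflicting edges of $G$ is contained in one extended triangle. Here the tree-of-triangles structure is used. If such an edge set has a common vertex it lies in the extended triangle of any triangle through that vertex, and the degree of that vertex is at most the size of that extended triangle, hence at most $\phi$. If it has no common vertex, then the fact that the triangles of $G$ through a fixed vertex form a path in $T$ (the fan at that vertex) should force the existence of one triangle $uvw$ of $G$ each of whose three vertices is incident with every edge of the set, so that the set lies in the extended triangle of $uvw$. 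Combined with $\omega(L(G)^2)\ge\phi$ this gives $s\chi^{\prime}(G)=\phi$.

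As a by-product, the coloring is produced in linear time: having computed $\phi$, color the edges greedily along the reverse of a perfect elimination ordering of $L(G)^2$; chordality guarantees that $\phi$ colors always suffice, and the ordering, as well as the conflicts a newly colored edge must avoid, can be maintained directly on $T$ — when a triangle $xyc$ with apex $c$ of degree two is attached, only the two edges $xc$ and $yc$ are new and, by the fan structure, their already colored conflicts occupy few colors — without ever materializing $L(G)^2$, whose size is not linear. The step I expect to be delicate is exactly the equality $\omega(L(G)^2)=\phi$: in an arbitrary graph a clique of $L(G)^2$ need not sit inside an extended triangle — all five edges of $C_5$ pairwise conflict, yet $C_5$ has no triangle at all — so it is the outerplanarity that must be invoked, through the path/fan structure of $T$, to exclude such ``spread out'' cliques; and keeping the corresponding coloring phase linear needs a little care so that each newly attached triangle sees only $O(\Delta)$ colored conflicting edges.
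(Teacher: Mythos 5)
Your route is genuinely different from the paper's. The paper never reasons about $\omega(L(G)^2)$: it only uses the easy pairwise fact that two edges at distance at most one either share a vertex or are joined by an edge of $G$, and hence lie together in the extended triangle of any triangle on that vertex or edge; it then colors greedily along a breadth-first traversal of the tree $T$ of triangles with $\phi$ colors. You instead invoke chordality of maximal outerplanar graphs, Cameron's theorem that $L(G)^2$ is then chordal, and reduce everything to $\omega(L(G)^2)\le\phi$. That reduction is legitimate, your lower bound is right, and your formula $\phi=\max\{d(u)+d(v)+d(w)-3 : uvw \text{ a triangle}\}$ is correct and computable in linear time from $T$; note also that the theorem only asks for the number $s\chi^{\prime}(G)$, so your entire discussion of producing a coloring via a perfect elimination ordering of $L(G)^2$ (and of keeping that linear without materializing $L(G)^2$) can simply be dropped.

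The genuine gap is exactly the step you flag and leave as ``should force'': that a clique of $L(G)^2$ with no common vertex is covered by the three vertices of a single triangle. This is the whole mathematical content of your upper bound (as your $C_5$ example shows, it fails in general graphs), so it cannot remain an expectation. It is true, and here is a clean way to close it. For an edge $e=ab$ of $G$ let $T_e\subseteq T$ be the set of triangles containing $a$ or $b$: it is the union of the fan of triangles at $a$ and the fan at $b$, and these two fans share a triangle on the edge $ab$, so $T_e$ induces a subtree of $T$. If two edges $e,f$ are at distance at most one, then either they share a vertex or some edge $xy$ of $G$ joins an endpoint of $e$ to an endpoint of $f$; a triangle containing that vertex, respectively the edge $xy$, lies in $T_e\cap T_f$. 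Hence for a clique $Q$ of $L(G)^2$ the subtrees $\{T_e : e\in Q\}$ pairwise intersect, and by the Helly property of subtrees of a tree they have a common triangle $\tau$; every edge of $Q$ then has an endpoint in $\tau$, so $Q$ lies in the extended triangle of $\tau$ and $|Q|\le\phi$. With this lemma your argument is complete, and it in fact yields the structural statement $s\chi^{\prime}(G)=\omega(L(G)^2)=\phi$ directly, whereas the paper's greedy argument needs only the weaker pairwise observation but has to argue (rather tersely) that the breadth-first coloring never lets two conflicting, already-colored edges slip through.
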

\begin{proof}
The algorithm colors the edges in a greedy manner as follows.
First we make one leaf node of $T$ as the root.
Then we traverse $T$ in a breath-first manner.
When we reach a node $v$, we color the uncolored edges 
of its corresponding
extended triangle $\tau$ so that the colors 
used for uncolored edges are different
from the colors of those colored edges in $\tau$.
As the number of edges of $\tau$ is at most $\phi$,
$\phi$ colors are sufficient to do the 
coloring of edges in $\tau$.
We proceed to color the edges in other extended triangles 
for other nodes in $T$
via the breath-first search traversal order. 
At the end of the traversal,
we finish the coloring of all edges in $G$ using only $\phi$ colors.

\smallskip 

For the correctness of the algorithm, we argue as follows.
It is easy to see that for any two edges of $(e,e')$ within distance two,
they must both appear in some extended triangle $\tau$ 
which implies that they
will obtain different colors when $\tau$ is visited  
in the breadth-first
traversal. 
Hence, the above coloring is thus a strong edge coloring for $G$,
and we obtain that $s\chi^{\prime}(G) = \phi$.
\qed\end{proof}
      
\section{Concluding remarks}
%%%%%%%%%%%%%%%%%%%%%%%%%%%%

If $G$ is a circular-arc graph then $L(G)^2$ is also 
a circular-arc graph~\cite{kn:golumbic3}. 
Unfortunately, coloring a circular-arc graph is 
NP-complete~\cite{kn:marx}.  
When $G$ is AT-free then also $L(G)^2$ is AT-free~\cite{kn:cameron2}. 
As far as we know the complexity of coloring AT-free graphs is 
an open problem. There is some hope, since the maximum independent set 
problem is polynomial for this class of graphs. 

For Halin graphs we tried to prove that there is an 
optimal strong edge-coloring such that the edges in the cycle 
can be colored with colors from a fixed set of constant size. 
If true, then this would probably improve the timebound for 
the strong chromatic index problem on Halin graphs. 
Moser and Sikdar prove that the maximum induced matching problem 
on planar graphs is fixed-parameter tractable~\cite{kn:moser}.  
As far as we know the parameterized complexity of 
the strong chromatic index problem on planar graphs is open. 
Computing a maximum induced matching in planar graphs, or 
in bipartite graphs is 
NP-complete~\cite{kn:cameron2}.  

\medskip 

An example of a distance-hereditary graph $G$ for 
which $L(G)^2$ is not chordal is depicted in 
Figure~\ref{counterexample}. 

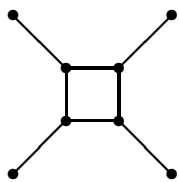
\begin{figure}
\setlength{\unitlength}{2pt}
\begin{center}
\begin{picture}(30,30) 
\thicklines
\put(0,0){\circle*{2.0}}
\put(30,0){\circle*{2.0}}
\put(10,10){\circle*{2.0}}
\put(20,10){\circle*{2.0}}
\put(10,20){\circle*{2.0}}
\put(20,20){\circle*{2.0}}
\put(0,30){\circle*{2.0}}
\put(30,30){\circle*{2.0}}
\put(0,0){\line(1,1){10}}
\put(10,10){\line(1,0){10}}
\put(10,10){\line(0,1){10}}
\put(20,10){\line(1,-1){10}}
\put(20,10){\line(0,1){10}}
\put(10,20){\line(1,0){10}}
\put(10,20){\line(-1,1){10}}
\put(20,20){\line(1,1){10}}
\end{picture}
\caption{A distance-hereditary graph $G$ for 
which $L(G)^2$ is not chordal.} 
\label{counterexample}
\end{center}
\end{figure}

Probably the following conjecture is true. 
If that is the case then the strong chromatic index 
can be computed in polynomial time for 
graphs of bounded rankwidth~\cite{kn:dvorak,kn:ganian}. 

\begin{conjecture}
There exists a function $\rho: \mathbb{N} \rightarrow \mathbb{N}$ 
for which the following holds. Let $G$ be a graph 
of rankwidth $k$. Then the rankwidth of $L(G)^2$ 
is at most $\rho(k)$. 
\end{conjecture}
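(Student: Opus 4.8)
Since rankwidth and clique-width are equivalent up to a function (Oum and Seymour), it suffices to produce a $\rho$ with $\mathrm{cw}(L(G)^{2})\le\rho(\mathrm{cw}(G))$, and I would work directly with a rank decomposition of $G$. It is worth recording first what will \emph{not} work. The square of the linegraph is first-order definable over the incidence structure of $G$: for edges $e,f$ one has $e\sim f$ in $L(G)^{2}$ exactly when some endpoint of $e$ is at distance at most one in $G$ from some endpoint of $f$. So $L(G)^{2}$ is a transduction of the incidence graph $I(G)$, but $I(G)$ need not have bounded rankwidth when $G$ does --- already the $1$-subdivision of $K_{n}$, which is $I(K_{n})$, has unbounded rankwidth --- so the standard ``bounded clique-width is preserved by $\mathrm{MSO}_{1}$-transductions'' route is closed. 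The proof must also genuinely use the square, since $L(G)$ alone destroys boundedness (the rook's graph $L(K_{n,n})$), whereas $L(K_{n,n})^{2}=K_{n^{2}}$ has rankwidth $1$; this collapsing is the mechanism to exploit.

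The plan is combinatorial. Fix a rank decomposition $(\mathcal{T},\mathcal{L})$ of $G$ of width $k$ with $\mathcal{T}$ rooted; for a node $t$ let $A_{t}\subseteq V(G)$ be the leaves below $t$, so that $\mathrm{cutrk}_{G}(A_{t})\le k$, which means $A_{t}$ has at most $2^{k}$ distinct neighborhoods towards $V(G)\setminus A_{t}$, and symmetrically. I would build a rank decomposition of $L(G)^{2}$ by routing: attach the vertex $e=uv$ at the node $\mathrm{lca}(u,v)$ of $\mathcal{T}$, then expand each node into a small ``broom'' of leaves so the attachments become genuine leaves of a subcubic tree. The vertices attached at or below $t$ are then exactly the edges with both endpoints in $A_{t}$, that is $E(G[A_{t}])$, while an edge straddling the cut $(A_{t},V(G)\setminus A_{t})$ is attached strictly above $t$. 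It remains to bound $\mathrm{cutrk}_{L(G)^{2}}(F)$ by a function of $k$ for every bipartition $(F,\bar F)$ of $E(G)$ read off an edge of the new tree.

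For the ``clean'' cuts $F=E(G[A_{t}])$ this follows by classifying each edge of $G$ according to the neighborhood (relative to the appropriate cut of $\mathcal{T}$) of each of its two endpoints: adjacency in $L(G)^{2}$ of $e\in F$ with $f\in\bar F$ requires an endpoint of $e$ to be equal to, or joined in $G$ to, an endpoint of $f$, and once one knows on which side each endpoint lies, both conditions are determined by the finitely many neighborhood classes; so the bipartite adjacency matrix factors through maps into a set of bounded size and its $GF[2]$-rank is bounded. The remaining cuts are internal to a broom at $t$: $F=S\cup E(G[A_{t}])$ with $S$ a subset of the edges whose least common ancestor is $t$, that is of the bipartite edge set $E_{G}(A_{t_{1}},A_{t_{2}})$ between the two children's leaf sets, which by the same classification splits into at most $2^{2k}$ complete-bipartite blocks. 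Inside one block the square of the linegraph is essentially the $K_{n,n}$ situation, so \emph{any} ordering of its edges along the broom contributes bounded cut-rank among themselves; ordering the broom block by block, the cross-contributions are again neighborhood-determined, and one should obtain a global bound $\rho(k)$, single- or iterated-exponential in $k$.

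The step I expect to be the genuine obstacle is controlling interactions \emph{across} a broom together with the already-decomposed part $E(G[A_{t}])$: adjacency of $e$ and $f$ in $L(G)^{2}$ can be mediated by a common neighbor $g$ in $L(G)$ that is an edge of $G$ straddling the cut, and one must check that such mediating edges do not blow up the rank --- that, modulo the neighborhood equivalence, there are only boundedly many relevant kinds of straddling edge at a node, so that the two sides interact through a bounded-rank interface even after the second power is taken. This is where the square is indispensable and where the combinatorics has to be executed carefully. As a sanity check, the conjecture is immediate when $G$ has bounded tree-width, since then $L(G)^{2}$ is an $\mathrm{MSO}_{2}$-transduction of a bounded-tree-width class and therefore has bounded clique-width; the whole difficulty lies in the ``wide but shallow'' regime of which $L(K_{n,n})^{2}$ is the model. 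If the direct rank estimate proves too lossy, a fallback is to route through Oum's vertex-minor obstructions for rankwidth: a bounded-rankwidth $G$ whose $L(G)^{2}$ had unbounded rankwidth would force arbitrarily complicated forbidden vertex-minors inside $L(G)^{2}$, which one would then try to pull back to obstructions in $G$ itself.
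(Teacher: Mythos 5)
You should first be aware that the paper contains no proof of this statement: it is posed in the concluding remarks as an open conjecture (``Probably the following conjecture is true\dots''), so there is no proof of record to compare yours against, and your text would have to stand on its own as a complete new argument. It does not. It is a plan rather than a proof --- you yourself defer the interaction between a broom and the already-decomposed part as ``the genuine obstacle'' --- and, more seriously, the one step you claim is settled is in fact false. You assert that for the clean cuts $F=E(G[A_t])$ the bipartite adjacency matrix of $L(G)^2$ across $(F,\bar F)$ factors through the neighborhood classes of the cut $(A_t,V\setminus A_t)$, because adjacency of $e$ and $f$ is ``determined by the finitely many neighborhood classes once one knows on which side each endpoint lies.'' This fails precisely when $f$ straddles the cut: then $f$ has an endpoint $x\in A_t$, and whether $x$ equals, or is $G$-adjacent to, an endpoint of $e$ is a question about the internal structure of $G[A_t]$, which the cut-rank of $(A_t,V\setminus A_t)$ does not control at all.

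Concretely, let $A=\{s_1,\dots,s_n\}\cup\{u_1,\dots,u_n\}\cup\{v_1,\dots,v_n\}$ with edges $s_iu_i$ and $u_iv_i$, and let the other side be a set $T$ completely joined to $\{s_1,\dots,s_n\}$ and to nothing else. The $A$--$T$ adjacency matrix has $GF[2]$-rank $1$, and every $s_i$ (respectively every $u_i,v_i$, every $t\in T$) lies in the same neighborhood class, so your classification predicts a bounded-rank pattern. Yet in $L(G)^2$ the edge $u_iv_i\in E(G[A])$ is adjacent to the straddling edge $s_jt$ exactly when $i=j$ (they share no endpoint, and the only possible endpoint adjacency is $u_i\sim s_j$), so the cut $(E(G[A]),\;\text{rest})$ contains an $n\times n$ identity submatrix and has cut-rank at least $n$ while $\mathrm{cutrk}_G(A)=1$. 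Hence the lca-routing decomposition has unbounded width already at the clean cuts, not only at the brooms; this refutes your decomposition and the classification argument (though not the conjecture itself --- this particular $L(G)^2$ has small rankwidth under a different layout, which is exactly the point: the edges straddling a cut must be placed in a way your routing does not provide). The preliminary observations (the incidence-graph transduction fails, $L(K_{n,n})$ versus $L(K_{n,n})^2$, the bounded-treewidth case) are sound sanity checks, and the vertex-minor fallback is only a direction; as far as this paper and your proposal are concerned, the conjecture remains open.
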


\section{Acknowledgements}
%%%%%%%%%%%%%%%%%%%%%%%%%%

Ton Kloks, Sheung-Hung Poon and Chin-Ting Ung 
thank the National Science Council of Taiwan for their support. 
Ton Kloks is supported under grant 
NSC~99--2218--E--007--016.
Sheung-Hung Poon and Chin-Ting Ung are supported under grants 
NSC 99--2218--E--007--016, NSC 100--2218--E--007--007 
and NSC 100--2628--E--007--020--MY3.

\end{document}